\documentclass[lettersize,journal]{IEEEtran}
\usepackage{amsmath,amsfonts}
\usepackage{algorithmic}
\usepackage{algorithm}
\usepackage{array}
\usepackage[caption=false,font=normalsize,labelfont=sf,textfont=sf]{subfig}
\usepackage{textcomp}
\usepackage{stfloats}
\usepackage{url}
\usepackage{verbatim}
\usepackage{graphicx}
\usepackage{cite}
\usepackage{mathrsfs}
\usepackage{color}
\usepackage{bm}
\usepackage{amssymb}
\usepackage{amsthm}
\usepackage{caption}
\usepackage{epstopdf}
\usepackage{booktabs}
\usepackage{epsfig}
\usepackage{float}
\usepackage{xcolor}  

\usepackage{hyperref}
\usepackage{todonotes}

\captionsetup{font=small} 
\newtheorem{thm}{Theorem}

\newtheorem{remark}{Remark}
\newtheorem{lem}{Lemma}

\newtheorem{example}{Example}
\newtheorem{proposition}{Proposition}

\hyphenation{op-tical net-works semi-conduc-tor IEEE-Xplore}
\DeclareMathOperator*{\argmin}{arg\,min}
	\begin{document}
		
		\title{Reconstruction of Graph Signals on Complex Manifolds with Kernel Methods}
		
		\author{Yu Zhang, Linyu Peng,~\IEEEmembership{Member,~IEEE,} and Bing-Zhao Li$^{\ast}$,~\IEEEmembership{Member,~IEEE,}
			\thanks{The current study was partially supported by the National Natural Science Foundation of China (No. 62171041), Natural Science Foundation of Beijing Municipality (No. 4242011), JSPS KAKENHI (24K06852), JST CREST (JPMJCR24Q5), and Keio University (Academic Development Fund, Fukuzawa Fund). \textit{(Corresponding author: Bing-Zhao Li.)}}
			\thanks{Yu Zhang is with the School of Mathematics and Statistics, Beijing Institute of Technology, Beijing 102488, China, and also with the Department of Mechanical Engineering, Keio University, Yokohama 223-8522, Japan (e-mail: zhangyu$\_$bit@keio.jp).}
			\thanks{Linyu Peng is with the Department of Mechanical Engineering, Keio University, Yokohama 223-8522, Japan (e-mail: l.peng@mech.keio.ac.jp).}
			\thanks{Bing-Zhao Li is with the School of Mathematics and Statistics, Beijing Institute of Technology, Beijing 102488, China (e-mail: li$\_$bingzhao@bit.edu.cn).}}
		
		\markboth{Journal of \LaTeX\ Class Files,~Vol.~, No.~, ~2024}%
		{Shell \MakeLowercase{\textit{et al.}}: A Sample Article Using IEEEtran.cls for IEEE Journals}
		
		\IEEEpubid{0000--0000/00\$00.00~\copyright~2024 IEEE}
		
		\maketitle
		
		
		\begin{abstract}
			Graph signals are widely used to describe vertex attributes or features in graph-structured data, with applications spanning the internet, social media, transportation, sensor networks, and biomedicine. Graph signal processing (GSP) has emerged to facilitate the analysis, processing, and sampling of such signals.  While kernel methods have been extensively studied for estimating graph signals from samples provided on a subset of vertices, their application to complex-valued graph signals remains largely unexplored. This paper introduces a novel framework for reconstructing graph signals using kernel methods on complex manifolds. By embedding graph vertices into a higher-dimensional complex ambient space that approximates a lower-dimensional manifold, the framework extends the reproducing kernel Hilbert space to complex manifolds. It leverages Hermitian metrics and geometric measures to characterize kernels and graph signals. Additionally, several traditional kernels and graph topology-driven kernels are proposed for reconstructing complex graph signals. Finally, experimental results on synthetic and real-world datasets demonstrate the effectiveness of this framework in accurately reconstructing complex graph signals, outperforming conventional kernel-based approaches. This work lays a foundational basis for integrating complex geometry and kernel methods in GSP.
		\end{abstract}
		
		\begin{IEEEkeywords}
			Graph signal processing, Graph signal reconstruction, Complex manifolds, Reproducing kernel Hilbert space.
		\end{IEEEkeywords}
		
		
		\section{Introduction}
		\label{Intro}
		\IEEEPARstart{I}{n} real-world signal processing, data in fields such as social media, communications, sensors, transportation, and biomedical applications are often associated with networks, and their irregular vertex attributes or features can be interpreted as graph functions or graph signals \cite{Network,Kernel,GFTlaplace}. Graph signal processing (GSP) has emerged as a crucial field for addressing the challenges of analyzing, processing, and sampling data defined on graph or network structures \cite{GFTlaplace,GFTadjacency2,Goverview,Ghistory}. GSP characterizes and analyzes the relationship between signals and graphs through the graph Fourier transform (GFT) and the corresponding spectral domain. In practical applications, GSP has been applied to image denoising \cite{Gdenoising}, brain signal analysis \cite{Gbrain},  the design of recommendation systems \cite{Grecommendation}, to mention only a few.
		
		The vast number of nodes in processing, transportation, and storage networks often requires significant computational and storage resources. Graph signal sampling addresses this issue by reducing the number of samples on the graph while enabling reconstruction of the underlying signal. It extends the conventional sampling paradigm to graph signals. The primary goal of graph signal reconstruction is to recover the entire signal based on observations from a subset of vertices. The main challenge in graph signal reconstruction is the design of optimal sampling and recovery strategies \cite{Gsampling}. Most existing methods focus on smooth or bandlimited signals, where the energy is typically concentrated in the eigenvector subspace of the Laplacian or adjacency matrix \cite{GUncertainty,GFTsampling,GFTSSS,GFTdualizing,GLCTsampling,Ggeneralizedsamp,GParallel}. As a result, sampling becomes an optimization problem of selecting the best rows of the eigenvector matrix corresponding to the GFT \cite{GUncertainty,GFTsampling,GFTSSS,GFTdualizing,Gfrequency}. However, these methods rely on the spectral decomposition of the Laplacian or adjacency matrix, which is computationally expensive, typically requiring $\mathcal{O}(N^3)$ operations with $N$ dimension of the matrix \cite{Ginterpolation}. 
		\IEEEpubidadjcol
		
		Kernel-based methods have gained widespread recognition for their effectiveness in reconstructing graph signals from limited samples available on a subset of vertices. Their flexibility and relatively low computational cost have driven significant advancements in graph signal filtering and reconstruction. Using the smoothness of the graph signals and the underlying graph topology, kernel-based reconstruction has been extensively studied in the context of graph signals. Early studies \cite{Kernel,MKL} introduced kernel regression as a framework to generalize traditional GSP methods, employing reproducing kernel Hilbert space (RKHS) and multi-kernel learning (MKL) strategies for flexible signal reconstruction. These methods were further extended to space-time signal reconstruction in dynamic graphs \cite{Kerneldynamic} and hybrid parametric and non-parametric models \cite{Semi-kernel}. A fast sampling method based on an extended sampling theorem was proposed under this framework \cite{Extendedsampling}. In \cite{Predictingkernel}, a kernel ridge regression (KRR) method was introduced, framing the learning problem for graph signals, while positive definite functions were employed for graph signal interpolation \cite{PDkernel}. More recently, \cite{Ginterpolation} developed a Gaussian mixture-based RKHS for graph signal interpolation and extrapolation. Furthermore, \cite{Generalizedkernel} applied kernel methods to signal reconstruction within a generalized GSP framework, while \cite{Bandlimitedkernel} focused on bandlimited signals, effectively recovering them using kernels. Gradient approaches have also been proposed to improve signal reconstruction on manifold graphs \cite{Gradient}. 
		
		These studies leverage kernels to define relationships between data points and capture the underlying signal smoothness and geometry, enabling the approximation of missing values and reconstruction of signals across all vertices, thus providing a powerful framework for graph signal reconstruction. However, while kernel functions are widely used in real-valued signal reconstruction, they are primarily designed for such settings and overlook the critical Hermitian properties of complex-valued signals. The extension of kernel methods to reconstruct complex-valued graph signals remains largely unexplored. Complex-valued signals, prevalent in applications such as electrical circuits, quantum networks, and wireless communications \cite{Complex-Valued}, carry magnitude and phase information, crucial to understanding the underlying physical systems. Traditional graph signal reconstruction methods do not capture phase information and intricate dependencies between real and imaginary components, necessitating specialized techniques to address the unique structure of complex signals. Furthermore, existing methods often lack the ability to integrate rich geometric structures in complex domains, limiting their applicability to more sophisticated signal processing tasks.
		
		Advances in complex manifolds and Hermitian geometry \cite{Complexmanifolds1, Complexmanifolds2, Hermitian} offer a novel pathway for extending kernel methods to the complex domain. Complex manifolds, as a generalization of Riemannian manifold, provide a natural framework for representing complex-valued data with smooth geometric and topological structures. In particular, Hermitian metrics and holomorphic mappings allow for the precise characterization of geometric relationships within complex-valued signal spaces. However, integrating these mathematical tools with GSP still requires further development.
		
		The motivation of this paper stems from the need to address these limitations and develop robust methods for handling complex-valued graph signals. Reconstructing complex-valued signals on graphs presents unique challenges, such as preserving Hermitian properties and maintaining coherence between signal components. Additionally, the geometric and topological richness of complex manifolds naturally aligns with the principles of RKHS \cite{RKHS}, offering a powerful means to extend kernel methods into the complex domain. This work aims to bridge the gap between complex geometry and GSP by developing a unified framework for reconstructing complex graph signals using kernel methods specifically tailored for manifold structures. 
		
		We summarize our contributions as follows:
		\begin{enumerate}
			\item Introduction of a novel framework for reconstructing complex graph signals using kernels on complex manifolds, integrating complex geometry and kernels into GSP.
			\item Extension of the RKHS framework to complex manifolds by introducing Hermitian metrics and geometric measures.  
			\item Development of traditional kernels and graph topology-driven kernels for reconstructing complex-valued graph signals, along with a generalization of MKL methods.  
			\item Validation of the proposed framework on synthetic and real-world datasets, demonstrating performance compared to traditional kernel-based methods on manifolds.  
		\end{enumerate}
		
		The remainder of this paper is structured as follows. Section II reviews graph signals and RKHS, focusing on sampling and reconstruction. Section III introduces complex manifolds, the RKHS on these manifolds, and kernel functions for graph signal reconstruction. Section IV presents Hermitian metrics and geometric measures for traditional and graph topology-driven kernels. Section V extends multidimensional kernel learning methods to improve reconstruction performance. Section VI validates the proposed methods on synthetic and real-world datasets. Finally, Section VII concludes the paper.  
		
		\textit{Notation.}
		Scalars are denoted by lowercase letters, vectors by bold lowercase letters, and matrices by bold uppercase letters. Sets or manifolds are represented by calligraphic letters, with $| \cdot |$ denoting the cardinality of a set. The $(n, m)$-th entry of a matrix $\mathbf{W}$ is $w_{nm}$. The symbols $\| \cdot \|_2$ and $\text{Tr}(\cdot)$ represent the Euclidean norm and the trace, respectively. $\mathbf{I}_N$ denotes the $N \times N$ identity matrix, abbreviated as $\mathbf{I}$ when the dimension is obvious. $\mathbf{0}$ (or $\mathbf{1})$ refers to a vector of appropriate dimension with all zeros (or ones). The span of the matrix columns is denoted by $\operatorname{span_c}\{\cdot\}$, and $\delta$ represents the Kronecker delta function. 
		The imaginary unit is denoted by $\mathrm{i}$. The superscripts $\top$, $\mathrm{H}$, and $\dag$ indicate the transpose, conjugate transpose, and pseudo-inverse, respectively, while $\overline{(\cdot)}$ denotes the complex conjugate. 
		
		
		\section{Preliminaries}
		\label{Preliminaries}
		\subsection{Graph Signals}
		An undirected weighted graph $\mathcal{G} = (\mathcal{V}, \mathcal{E}, \mathbf{W})$ is defined by its vertex set $\mathcal{V} = \{v_0, v_1, \dots, v_{N-1}\}$ and edge set $\mathcal{E}$ of size $|\mathcal{V}| = N$. A graph signal is defined as a function $f$ from  $\mathcal{V}$ to  $\mathbb{R}$ or $\mathbb{C}$, typically represented as a vector $\bm{f}$ in $\mathbb{R}^N$ or $\mathbb{C}^N$, respectively. The weighted adjacency matrix $\mathbf{W} \in \mathbb{R}^{N \times N}$ is symmetric and encodes the weights of the edges and the pairwise relationships between the graph vertices. If there is an edge $(n, m)$ connecting the vertices $v_n$ and $v_m$, the weight of this edge is denoted by $w_{nm}$; otherwise, $w_{nm}=0$ \cite{GFTlaplace}.
		
		In many cases, edge weights are not defined naturally. In practice, however, $\mathbf{W}$ can be generated through an application-dependent kernel function $\kappa: \mathcal{V} \times \mathcal{V} \to \mathbb{R}$, which captures the pairwise similarities between the nodes. Specifically, $v_n$ and $v_m$ may be associated with feature vectors, and the corresponding kernels are assumed to be symmetric positive definite (SPD) \cite{Kernel}. The resulting weighted adjacency matrix can serve as a graph shift operator, allowing the GFT to be defined through the eigenvalues and eigenvectors obtained from its matrix decomposition \cite{GFTadjacency2}.
		
		Alternatively, the graph signal can also be defined by eigendecomposition of a Laplacian matrix \cite{GFTlaplace,GFTadjacency2,Goverview}. 
		
		For a symmetric difference operator, the non-normalized graph Laplacian is given by $\mathbf{L} = \mathbf{D}^{\mathrm{deg}} - \mathbf{W}$, where $\mathbf{D}^{\mathrm{deg}}= \text{diag}(\mathbf{W}\mathbf{1})$ denotes the degree matrix of the graph $\mathcal{G}$. The vector $\mathbf{W} \mathbf{1}$ represents the row-wise summation of the adjacency matrix $\mathbf{W}$, with each entry corresponding to the degree of the respective vertex in the graph. The normalized graph Laplacian is expressed as $$\mathbf{L}_{\text{norm}} = \mathbf{I} - (\mathbf{D}^{\mathrm{deg}})^{-1/2} \mathbf{W} (\mathbf{D}^{\mathrm{deg}})^{-1/2}.$$

		\begin{example}
			We simulate two commonly used graphs embedded within manifolds: a $1000$-node two moons graph and a $1500$-node Swiss roll graph (Fig. \ref{fig01}). In both cases, the signals are random.
		\end{example}
		
		\begin{figure}[htbp]
			\begin{center}
				\begin{minipage}[t]{0.45\linewidth}
					\centering
					\includegraphics[width=\linewidth]{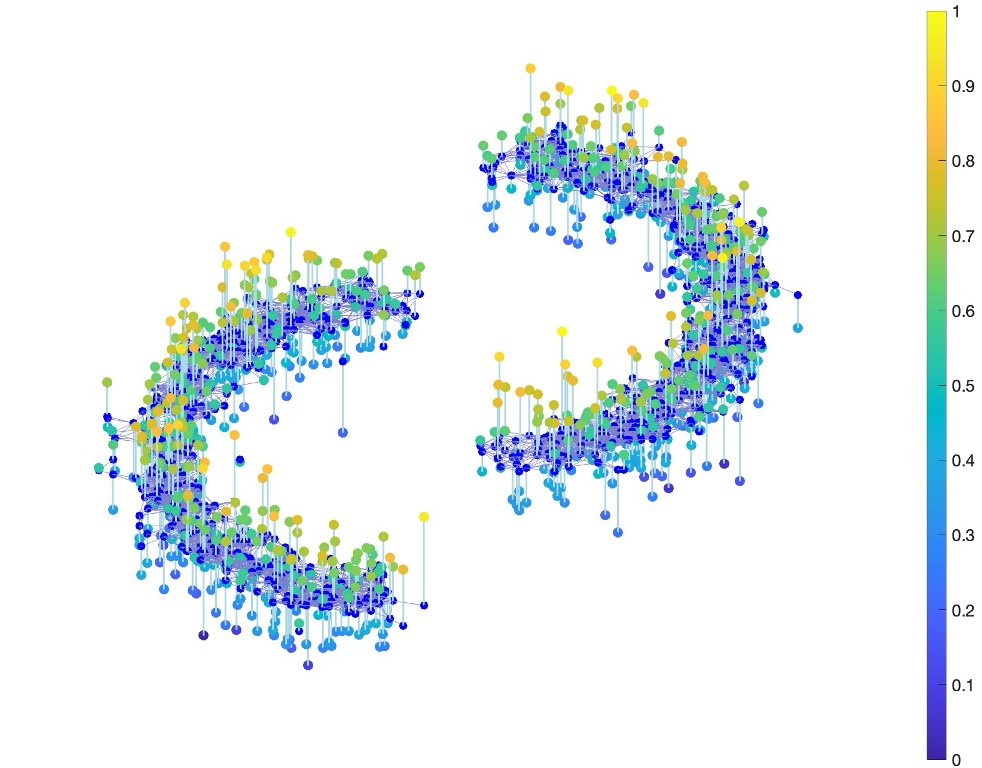}
					\parbox{1.7cm}{\tiny (a) Two moons graph.}
				\end{minipage}
				\begin{minipage}[t]{0.45\linewidth}
					\centering
					\includegraphics[width=\linewidth]{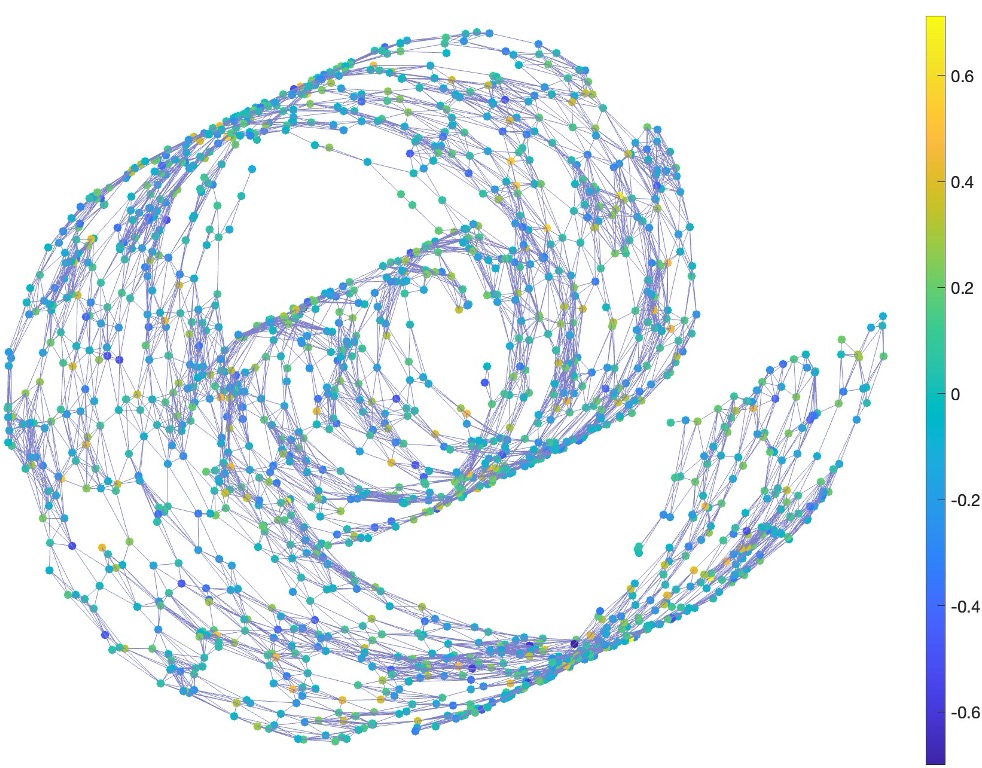}
					\parbox{1.7cm}{\tiny (b) Swiss roll graph.}
				\end{minipage}
			\end{center}
			\caption{Random signals on graphs: (a) Two nested semicircles with light blue columns and color-coded endpoints representing signal values. (b) A 3D spiral manifold with vertex colors indicating signal values. Edges are determined by adjacency matrices constructed using a Gaussian kernel \cite{GSPBOX}.}
			\label{fig01}
		\end{figure}
		
		
		\subsection{The Reproducing Kernel Hilbert Space}
		In GSP with real-valued graph signals, SPD kernels allow the introduction of inner products in Hilbert spaces. For any SPD kernel $\kappa$, consider the points $x, x' \in \mathcal{X}$, where $\mathcal{X}$ is a simple differentiable manifold, the following inequality holds \cite{RKHS}  
		\[
		\kappa(x, x')^2 \leq \kappa(x, x) \kappa(x', x'),
		\]  
		which ensures the existence of a Hilbert space $\mathcal{H}$ and a feature mapping $ \Phi: \mathcal{X} \to \mathcal{H}$ such that for all $x, x' \in \mathcal{X}$,  
		\[
		\kappa(x, x') = \langle \Phi(x), \Phi(x') \rangle.
		\]  
		Furthermore, $\mathcal{H}$ satisfies the reproducing property, which states that for any ${f} \in \mathcal{H}$ and $x \in \mathcal{X}$,  
		\[
		{f}(x) = \langle {f}, \kappa(x, \cdot) \rangle.
		\]  
		As a result, $\mathcal{H}$ is referred to as the RKHS associated with $\kappa$.  
		In practical implementations, the kernel function $\kappa(x, x')$ is evaluated for a finite set of vertices $\mathcal{V}$ on the graph. For any $\{x_1,\ldots, x_M\}$, the evaluations define the kernel matrix $\mathbf{K} \in \mathbb{R}^{M \times M}$ as  
		\[
		\mathbf{K}_{ij} = \kappa(x_i, x_j),
		\]
		where $x_i$ and $x_j$ represent the nodes corresponding to the $i$-th and $j$-th vertices in the set $\mathcal{V}$. The kernel matrix $\mathbf{K}$ is symmetric positive semi-definite. It encapsulates pairwise relationships between graph signals and provides a structured representation of the graph signal space in terms of inner products in $\mathcal{H}$ \cite{Hilbert}.  
		
		The Hilbert space $\mathcal{H}$ associated with an SPD kernel $\kappa$ is a structured RKHS. For any Hilbert space $\mathcal{H}$ admitting a feature mapping $\Phi: \mathcal{X} \to \mathcal{H}$ such that $\kappa(x, x') = \langle \Phi(x), \Phi(x') \rangle$ for all $x, x' \in \mathcal{X}$, $\mathcal{H}$ serves as the \textit{feature space} associated with $\kappa$, and $\Phi$ is the corresponding feature mapping.  
		
		The norm in the feature space $\mathcal{H}$, induced by its inner product, is denoted by $ ||\cdot||_{\mathcal{H}}$. For any $f \in \mathcal{H}$, 
		\[
		||f||_{\mathcal{H}} = \sqrt{\langle f, f \rangle}.
		\]  
		It is crucial to note that the feature space associated with a kernel $\kappa$ is not necessarily unique and may vary in dimensionality. The dimension of the feature space can refer  to either the explicit dimension determined by a specific feature mapping or the dimensionality of the RKHS associated with $\kappa$.
		
		\subsection{Sampling and Reconstruction of Graph Signals in RKHS}  
		In scenarios with limited measurement resources, a sampling set $\mathcal{S} \subseteq \mathcal{V}$ is selected from the graph signal $f \in \mathcal{H}$. The  values of $f$ at all vertices collectively form the graph signal vector $\bm{f} = \left( f(v_1), f(v_2), \dots, f(v_N) \right)^{\top}$, which is assumed to be real-valued in this subsection.   The sampling process is represented by a sampling matrix $\mathbf{D} \in \{0, 1\}^{|\mathcal{S}| \times N}$, defined as \cite{GFTsampling}  
		\begin{equation}\label{D}
			d_{ij} = 
			\begin{cases}
				1, & \text{if } j = \mathcal{S}\{i\}, \\
				0, & \text{else}.
			\end{cases}
		\end{equation}  
		Here, $|\mathcal{S}|$ denotes the size of the sample set $\mathcal{S}$.
		Suppose a series of noisy graph samples (or observations) $\bm{y}_{\mathcal{S}}$ are available, which is modeled  as 
		\begin{equation}
			\bm{y}_{\mathcal{S}} = \mathbf{D}\bm{f} + \bm{n}, \label{yS}
		\end{equation}  
		where $\bm{n}$ is a Gaussian noise.
		
		The objective of graph signal reconstruction is to estimate the original function $f$ by solving an optimization problem that minimizes the squared error between the measured noisy signal and the reconstructed signal \cite{Extendedsampling}, combined with a regularization term that penalizes large norms in the RKHS $ \mathcal{H} $ as follows  
		\begin{equation}
			f^{\text{opt}} = \argmin_{f \in \mathcal{H}} \left(\frac{1}{|\mathcal{S}|} \left\| \bm{y}_{\mathcal{S}} - \bm{f}_\mathcal{S} \right\|^2 + \gamma \|f\|_{\mathcal{H}}^2\right), \label{fopt}
		\end{equation} 
		where $\bm{f}_\mathcal{S} = \mathbf{D}\bm{f}$ represents the reconstructed signal vector $\bm{f}$ on the sampled vertices in the subset $\mathcal{S}$, and $ \gamma$ is the regularization parameter balancing reconstruction accuracy and smoothness in the feature space.
		
		Alternatively, by employing a kernel-based approach that relates the graph signal $\bm{f}$ to the kernel matrix via $\bm{f} = \mathbf{K} \bm{\alpha}$ \cite{Kernel,Learningkernel}, the kernel coefficients $\bm{\alpha}$ can be estimated by solving the following optimization problem  
		\[
		\bm{\alpha}^{\text{opt}} = \argmin_{\bm{\alpha} \in \mathbb{R}^N} \left(\frac{1}{|\mathcal{S}|} \left| \left| \bm{y}_\mathcal{S} - \mathbf{DK} \bm{\alpha} \right| \right|^2 + \gamma \bm{\alpha}^\top \mathbf{K} \bm{\alpha}\right).
		\]  
		
		This can be equivalently  reformulated as 
		\[
		\bm{f}^{\text{opt}} = \argmin_{\bm{f} \in \operatorname{span_c}\{\mathbf{K}\}} \left(\frac{1}{|\mathcal{S}|} \left| \left| \bm{y}_{\mathcal{S}} - \mathbf{D}\bm{f} \right| \right|^2 + \gamma \bm{f}^\top \mathbf{K}^\dagger \bm{f}\right),
		\] 
		which represents the Tikhonov regularization \cite{TR} for graph signal reconstruction. If $ \mathbf{K}$ contains null eigenvalues, it is typically perturbed as $\mathbf{K} + \epsilon \mathbf{I}$ to ensure invertibility. The solution can be efficiently obtained using KRR, where the original graph signal is interpolated as a linear combination of the sampled values.

		
		\section{Graph Signal Reconstruction with Kernels over Complex Manifolds}
		\label{comlpex}
		In many applications, particularly in the fields of machine learning and signal processing, kernel methods have emerged as a dominant approach for nonlinear function estimation \cite{Learningkernel}. Their popularity can be attributed to their simplicity, flexibility, and excellent performance. Furthermore, kernel methods find extensive applications in complex manifolds. This section introduces kernels on complex manifolds \cite{Complexmanifolds2} as a novel framework for graph signal reconstruction and explores the implications of representation theorems.
		
		To introduce kernel methods, we first present an assumption that constructs a complex manifold $\mathcal{M}$ and embeds the vertices of the graph $\mathcal{G}$ into it. A complex manifold is a topological space that can be locally mapped to the complex space $\mathbb{C}^N$, the transition maps between these local charts being holomorphic, thereby ensuring an analytic structure.
		
		
		\subsection{Complex Manifold Assumption}
		In our assumption of the complex manifold model, we typically posit that the actual data resides within a higher-dimensional complex ambient space $\mathcal{Z}$ and is embedded in or approximates a lower-dimensional complex manifold $\mathcal{M} \subset \mathcal{Z}$. The motivation behind this assumption is that, despite the higher-dimensional and complex manifestation of the data, its intrinsic structure can often be effectively described by a lower-dimensional complex geometric form \cite{Ginterpolation}. Assuming the ambient space $\mathcal{Z}$ is equipped with a probability distribution $P_{\mathcal{Z}}$, for a point $z\in \mathcal{Z}$ (with components $z^{i}$), the complex manifold $\mathcal{M}$ is defined as the support of this distribution, reading 
		\[
		\mathcal{M} = \text{supp}(P_{\mathcal{Z}}) = \{ z \in \mathcal{Z} \mid P_{\mathcal{Z}}(z) \neq 0 \} \subset \mathcal{Z},
		\]  
		which satisfies
		\[
		P_{\mathcal{Z}}(\mathcal{M}) = \int_{\mathcal{M}} p(z) \, \mathrm{vol}_{\mathcal{M}}(z) = 1,
		\]  
		where $p(z)$ is the probability density function, and $\mathrm{d}P_{\mathcal{Z}}(z) = p(z) \, \mathrm{vol}_{\mathcal{M}}(z)$. The measure is induced by a Hermitian metric $h$ \cite{Hermitian}.  
		
		The Hermitian metric $h$ generalizes the notion of distance on $\mathcal{M}$, reflecting both the local geometry and the holomorphic structure of the manifold. It is defined on the holomorphic tangent bundle, given in local coordinates $\{z^{i}\}$ as
		\[
		h = h_{i \bar{j}} \mathrm{d}z^{i} \otimes \mathrm{d}\bar{z}^{j},
		\]
		where \(h_{i \bar{j}}\) is the local metric tensor. This metric naturally induces a volume form on \(\mathcal{M}\), 
		\[
		\operatorname{vol}_{\mathcal{M}} = \left(\frac{\mathrm{i}}{2}\right)^{n}\det\left(\mathbf{H}\right) \, \mathrm{d}z^{1} \wedge \mathrm{d}\bar{z}^{1} \wedge \dots \wedge \mathrm{d}z^{n} \wedge \mathrm{d}\bar{z}^{n},
		\]
		where $\mathbf{H} = (h_{i\bar{j}})$ is the matrix form of the metric tensor. 
		
		For applications involving complex signals, the probability distribution $p(z)$ should align with the geometry of the manifold, capturing signal-specific properties while maintaining holomorphic consistency. Further details of Hermitian metrics and their role in defining the holomorphic structure are briefly reviewed in \textbf{Appendix} \ref{appendix A}.
		
		Through this complex manifold assumption, we infer that the distribution of data in the higher-dimensional complex domain reflects an intrinsic lower-dimensional complex geometric structure, and these distributions are closely tied to the geometric attributes and probability densities on the manifold. This assumption provides us with geometric tools to analyze complex graph signals and their embedded structures.
		
		
		\subsection{RKHS in Complex Manifolds}
		As the counterpart to the real-valued case, a kernel $\kappa: \mathcal{Z} \times \mathcal{Z} \rightarrow \mathbb{C}$ defined in a complex manifold $\mathcal{Z}$ will be assumed to be Hermitian\footnote{A kernel $\kappa(z, z')$ is Hermitian if $\kappa(z, z') = \overline{\kappa(z', z)}$, which reduces to standard symmetry on real-valued manifolds.} positive semi-definite, and will be referred to simply as a positive kernel. In the current study, $\mathcal{Z}$ is the ambient space that contains the manifold $\mathcal{M}$. An RKHS is a Hilbert space of functions defined in $\mathcal{Z}$, which is associated with the kernel $\kappa$. We introduce an alternative definition that emphasizes the one-to-one correspondence between the kernel  $\kappa$ and its RKHS $\mathcal{H}$.
		
		This alternative definition is derived from the space $\mathcal{L}^2_p(\mathcal{Z})$, the space of square-integrable complex functions with respect to the probability measure $P_{\mathcal{Z}}$, and an integral operator associated with the kernel $\kappa$ \cite{Ginterpolation}. The inner product of two functions $f, g \in \mathcal{L}^2_p(\mathcal{Z})$ is given by
		\begin{equation}
			\langle f, g \rangle_{\mathcal{L}^2_p} = \int_{\mathcal{Z}} f(z) \overline{g(z)} p(z) \, \mathrm{vol}_{\mathcal{M}}(z). \label{innerproduct1}
		\end{equation}
		
		The integral operator $L_\kappa: \mathcal{L}^2_p(\mathcal{Z}) \rightarrow \mathcal{L}^2_p(\mathcal{Z})$, associated with the kernel $\kappa$, is defined by
		\begin{equation}
			(L_\kappa f)(z) = \int_{\mathcal{Z}} \kappa(z, z') f(z') p(z') \, \mathrm{vol}_{\mathcal{M}}(z').
		\end{equation}
		It can be shown that, if the kernel is positive, the eigenfunctions of $L_\kappa$ form an orthonormal basis for the RKHS and exhibit a discrete spectrum of positive eigenvalues that decay to zero. This leads to an alternative definition of the RKHS, i.e.,
		\begin{equation}
			\mathcal{H} = \left\{ f: \mathcal{M} \rightarrow \mathbb{C} \, \middle| \, f(z) = \sum_{n=1}^{N} \alpha_n \kappa(z, z_n), \, \alpha_n \in \mathbb{C} \right\},
		\end{equation}
		where $z_n$ represents the node on the manifold $\mathcal{M}$, corresponding to the vertex $v_n$ in the vertex domain $\mathcal{V}$. 
		
		For points $z$ and $z'$, the kernel function $\kappa(z, z')$ measures the similarity between the estimated values $\Phi(z)$ and $\Phi(z')$ in $\mathcal{H}$, where $\Phi$ denotes the feature mapping. If a feature vector associated with point $z$ is available, it is denoted as $\mathbf{z}=\left(\mathrm{z}^1,\mathrm{z}^2,\ldots,\mathrm{z}^D\right) \in \mathbb{C}^{D}$. This definition induces a Hermitian positive semi-definite $N \times N$ matrix, with entries $\mathbf{K}_{nm} := \kappa(z_n, z_m)$.  
		
		When feature vectors $\mathbf{z}_n, \mathbf{z}_m \in \mathbb{C}^D$ associated with graph vertices $\mathcal{V}$ are available, kernels can be naturally defined. For example, the Gaussian kernel \cite{Learningkernel} now can be extended as follows
		\[
		\kappa(z_n, z_{m}) = \exp\left(-\frac{\|\mathbf{z}_n - \mathbf{z}_{m}\|^2}{2\sigma^2}\right), \quad \sigma^2 > 0.
		\]  
		When feature vectors are unavailable, graph kernels can instead be constructed based on the graph topology; see \cite{GFTlaplace} for instance.
		
		Signals in the RKHS are finite-dimensional, as their expansion involves a finite number of terms. This contrasts with more general RKHSs defined on infinite sets, such as $\mathbb{C}^p$ \cite{Kernel}, which are typically infinite-dimensional.
		
		As a result, any function $f \in \mathcal{H}$ can be evaluated on the discrete set of points $\{z_1, z_2, \ldots, z_N\}$, forming a vector $\bm{f} = \left( f(z_1), f(z_2), \ldots, f(z_N)\right)^{\top} \in \mathbb{C}^N$, which can be written as
		\begin{equation}
			\bm{f} = \mathbf{K} \bm{\alpha},\label{Kalpha}
		\end{equation}
		where $\bm{\alpha} = \left(\alpha_1, \alpha_2, \ldots, \alpha_N\right)^{\top}$ is an $N$-dimensional  coefficient vector. The inner product in the RKHS for two functions \begin{equation}
			f(z) = \sum_{n=1}^{N} \alpha_n \kappa(z, z_n),\quad g(z) = \sum_{n=1}^{N} \alpha'_n \kappa(z, z_n)
		\end{equation}  
		is defined by
		\begin{equation}
			\langle f, g \rangle_{\mathcal{H}} =\sum_{n=1}^{N} \sum_{m=1}^{N} \overline{\alpha_n} \alpha'_m \kappa(z_n, z_m) = \bm{\alpha}^{\mathrm{H}} \mathbf{K} \bm{\alpha}',
		\end{equation}
		where $\bm{\alpha}' = \left(\alpha'_1, \alpha'_2, \ldots, \alpha'_N\right)^{\top}$. The corresponding RKHS norm is given  by
		\begin{equation}
			\|f\|^2_{\mathcal{H}} := \langle f, f \rangle_{\mathcal{H}} = \bm{\alpha}^{\mathrm{H}} \mathbf{K} \bm{\alpha},
		\end{equation}
		which can serve as a regularization term to control overfitting. When $\mathbf{K = I}$, the RKHS norm $\|f\|^2_{\mathcal{H}} = \|\bm{f}\|^2_2$ is  equivalent to the Euclidean norm. When the kernel matrix $\mathbf{K}$ is strictly positive definite, the RKHS induced by the kernel bijectively corresponds to the function space $\mathbb{C}^N$. While the set of functions remains the same, the RKHS inner product and norm depend on the specific kernel. This highlights the universality of positive-definite kernels for graph signal reconstruction, with variations primarily reflected in the induced geometry.
		
		The term reproducing kernel arises from the reproducing property. Define the feature mapping $\Psi: \mathcal{M} \to \mathcal{H}$  as 
		\[
		\Psi(z_n) = \kappa(\cdot, z_n), \quad n \in \{1, 2, \dots, N\}.
		\]  
		Then, using the reproducing property, we have  
		\[
		\langle \Psi(z_n), \Psi(z_{m}) \rangle_{\mathcal{H}} = \mathbf{e}_{n}^{\mathrm{H}} \mathbf{K} \mathbf{e}_{m} = \kappa(z_n, z_{m}),
		\]  
		where $n,m \in \{1, 2, \dots, N\}$, and $\mathbf{e}_n$ and $\mathbf{e}_m$ are the $n$-th and $m$-th columns of the identity matrix $\mathbf{I}$, respectively. This expression extracts the corresponding entry from $\mathbf{K}$.
		
		This property is crucial when dealing with RKHS defined on infinite spaces, as it provides an efficient alternative to the expensive multidimensional integrals required for inner products, such as Eq. \eqref{innerproduct1}. 
		The reproducing property simplifies the calculation of the inner product by using the kernel directly, 
		\[
		\langle \Psi(z), \Psi(z') \rangle_{\mathcal{H}} = \kappa(z, z').
		\]  
		
		Fig. \ref{fig02} illustrates the RKHS constructed on the complex manifold. We assume that the set of vertices $\mathcal{V}$ is a collection of sampling points from a  complex manifold \( \mathcal{M} \) embedded in the complex space $\mathcal{Z}$, i.e., $\mathcal{V} \subset \mathcal{M} \subset \mathcal{Z}$. Once a kernel function $\kappa$ is defined, it induces a unique RKHS over $\mathcal{V}$, where $\kappa: \mathcal{V} \times \mathcal{V} \to \mathbb{C}$ is the corresponding reproducing kernel, and $\mathcal{H}$ is the associated RKHS. The kernel function is then extended from the discrete set of vertices $\mathcal{V}$ to the continuous complex manifold $\mathcal{M}$, i.e., $\kappa: \mathcal{M} \times \mathcal{M} \to \mathbb{C}$. Since the graph is constructed using this kernel, it induces a unique function space defined over the continuous domain of $\mathcal{M}$. Here, the graph signal $f: \mathcal{V} \to \mathbb{C}$ is viewed as $N$ samples taken from a continuous function $f: \mathcal{M} \to \mathbb{C}$, where $f \in \mathcal{H}$.
		
		\begin{figure}[htbp]
			\begin{center}
				\begin{minipage}[t]{0.85\linewidth}
					\centering
					\includegraphics[width=\linewidth]{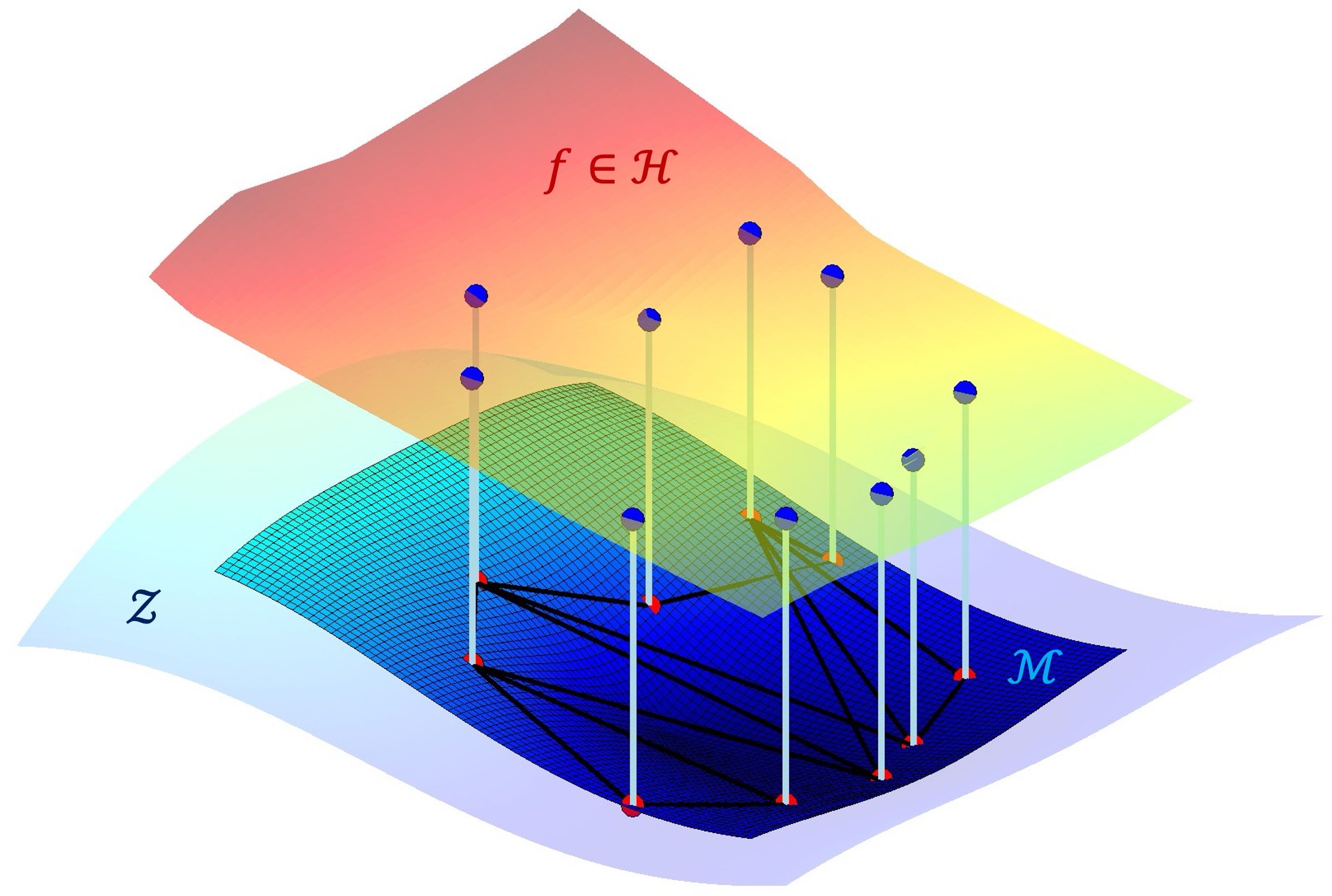}
				\end{minipage}
			\end{center}
			\caption{RKHS over complex manifold model illustration. Red points represent vertices embedded in the complex manifold $\mathcal{M}$, black lines denote edges between vertices, and light blue columns represent complex signals, extending to the blue dots, which correspond to the sampling points of the continuous function $f \in\mathcal{H}$.}
			\label{fig02}
		\end{figure}

		\subsection{Sampling and Reconstruction of Graph Signals in RKHS over Complex Manifolds}
		In practical scenarios, due to limitations in measurement resources, we adopt a sampling strategy similar to that in RKHS over differentiable manifolds. Specifically, for an $N$-dimensional graph signal $\bm{f}$, we select $|\mathcal{S}|$ samples from the sampling set $\mathcal{S} \subset \{v_0, v_1, \dots, v_{N-1}\}$, where $|\mathcal{S}| \ll N$. 
		
		In the RKHS of complex manifolds, the corresponding sampling matrix $\mathbf{D}$ and the sampled graph signal $\bm{y}_{\mathcal{S}}$ can similarly be obtained using Eqs. \eqref{D} and \eqref{yS}. Given sampled signals over the set $\mathcal{S}$, kernel-based graph signal reconstruction on a complex manifold can be formulated as an optimization problem (see Eq. \eqref{fopt}) to estimate the original signal.
		
		Utilizing the relationship $\bm{f} = \mathbf{K}\bm{\alpha}$ in Eq. \eqref{Kalpha}, where $\mathbf{K}$ is the kernel matrix and $\bm{\alpha}$ is the coefficient vector, we can estimate the kernel coefficients as follows
		\begin{equation}
			\bm{\alpha}^{\text{opt}} = \argmin_{\bm{\alpha} \in \mathbb{C}^N} \left(\frac{1}{|\mathcal{S}|}  \left| \left| \bm{y}_{\mathcal{S}} - \mathbf{D}\mathbf{K}\bm{\alpha} \right|\right|^2 + \gamma \bm{\alpha}^{\mathrm{H}} \mathbf{K} \bm{\alpha}\right).
		\end{equation}
		
		Thus, the reconstruction problem for the graph signal $\bm{f}$ on the complex manifold $ \mathcal{M}$ can be restated as
		\begin{equation}
			\bm{f}^{\text{opt}}  = \argmin_{\bm{f} \in \operatorname{span_c}\{\mathbf{K}\}} \left(\frac{1}{|\mathcal{S}|}  \left| \left| \bm{y}_{\mathcal{S}} - \mathbf{D}\bm{f} \right|\right|^2 + \gamma \bm{f}^{\mathrm{H}} \mathbf{K}^{\dagger} \bm{f}\right), \label{fopt2}
		\end{equation}
		which is a KRR estimator. 
		
		To address the optimization problem, we invoke the representer theorem \cite{RKHS}.
		
		\begin{thm} \label{thm1}
			In the complex manifold $\mathcal{M}$, let $\kappa: \mathcal{M} \times \mathcal{M} \to \mathbb{C}$ be a kernel  with the corresponding RKHS $\mathcal{H}$. Given a set of sample points $\mathcal{S} \subset  \{z_1, z_2, \ldots, z_N\} \subset \mathcal{M}$, the solution to the optimization problem is 
			\[
			f^{\text{opt}}(z) = \sum^{|\mathcal{S}|}_{m = 1} \beta_m \kappa(z, z_m),
			\]
			where \(\beta_m \in \mathbb{C}\) represents the sampled entries of \(\alpha_n\) corresponding to the indices in \(\mathcal{S}\),  which form a vector \(\bm{\beta}=\mathbf{D}\alpha\). Thus, the solution is given by
			\[
			\bm{f}^{\text{opt}} = \mathbf{K}_{\mathcal{S}} \bm{\beta},
			\]
			where $\mathbf{K}_{\mathcal{S}} = \mathbf{D}\mathbf{K}\mathbf{D}^\top$ is an $|\mathcal{S}| \times |\mathcal{S}|$ matrix, defined using the kernel matrix $\mathbf{K}$ and the sampling matrix $\mathbf{D}$.
		\end{thm}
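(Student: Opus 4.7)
The plan is to adapt the classical representer theorem to the complex RKHS setting via the standard orthogonal decomposition argument, taking care that all inner products are Hermitian and that the feature mapping is $\Psi(z_m)=\kappa(\cdot, z_m)$. First, I would introduce the closed subspace $\mathcal{H}_{\mathcal{S}} := \operatorname{span}_{\mathbb{C}}\{\kappa(\cdot, z_m)\,:\, z_m \in \mathcal{S}\} \subset \mathcal{H}$, and decompose any candidate $f\in\mathcal{H}$ as $f = f_{\parallel}+f_{\perp}$ with $f_{\parallel}\in\mathcal{H}_{\mathcal{S}}$ and $f_{\perp}\in\mathcal{H}_{\mathcal{S}}^{\perp}$ with respect to $\langle\cdot,\cdot\rangle_{\mathcal{H}}$.

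The key computation uses the reproducing property twice. For every sampled node $z_m\in\mathcal{S}$, orthogonality gives $\langle f_{\perp}, \kappa(\cdot, z_m)\rangle_{\mathcal{H}}=0$, hence
\[
f(z_m) \;=\; \langle f,\kappa(\cdot,z_m)\rangle_{\mathcal{H}} \;=\; \langle f_{\parallel},\kappa(\cdot,z_m)\rangle_{\mathcal{H}} \;=\; f_{\parallel}(z_m).
\]
Therefore the sampled vector $\mathbf{D}\bm{f}$ depends only on $f_{\parallel}$, so the data-fidelity term $\tfrac{1}{|\mathcal{S}|}\|\bm{y}_{\mathcal{S}}-\mathbf{D}\bm{f}\|^{2}$ is unaffected by $f_{\perp}$. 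For the regularizer, the Hermitian analogue of the Pythagorean identity yields
\[
\|f\|_{\mathcal{H}}^{2} \;=\; \|f_{\parallel}\|_{\mathcal{H}}^{2} + \|f_{\perp}\|_{\mathcal{H}}^{2} \;\geq\; \|f_{\parallel}\|_{\mathcal{H}}^{2},
\]
with equality iff $f_{\perp}=0$. Combining, the objective is strictly decreased (or left equal) by projecting onto $\mathcal{H}_{\mathcal{S}}$, so any minimizer must lie in $\mathcal{H}_{\mathcal{S}}$, giving the representation $f^{\text{opt}}(z)=\sum_{m=1}^{|\mathcal{S}|}\beta_m\kappa(z,z_m)$ for some coefficients $\beta_m\in\mathbb{C}$.

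To obtain the matrix form, I would evaluate $f^{\text{opt}}$ at the sampled nodes: the $i$-th sampled value equals $\sum_{m=1}^{|\mathcal{S}|}\beta_m\kappa(z_{i},z_m)$, which is the $i$-th entry of $\mathbf{K}_{\mathcal{S}}\bm{\beta}$, where $\mathbf{K}_{\mathcal{S}}=\mathbf{D}\mathbf{K}\mathbf{D}^{\top}$ by construction of $\mathbf{D}$. Relating $\bm{\beta}$ to the full coefficient vector $\bm{\alpha}$ via $\bm{\beta}=\mathbf{D}\bm{\alpha}$ follows from the reindexing induced by $\mathbf{D}$; this is essentially bookkeeping.

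The main obstacle, though not a deep one, is the complex/Hermitian bookkeeping: I need to verify that $\mathcal{H}_{\mathcal{S}}$ is indeed closed (immediate since it is finite-dimensional), that the inner product $\langle f,g\rangle_{\mathcal{H}}=\bm{\alpha}^{\mathrm{H}}\mathbf{K}\bm{\alpha}'$ is conjugate-linear in the correct slot so that the reproducing identity reads $f(z_m)=\langle f,\kappa(\cdot,z_m)\rangle_{\mathcal{H}}$ rather than its conjugate, and that the Pythagorean identity holds for the Hermitian inner product (it does, since cross-terms $\langle f_{\parallel},f_{\perp}\rangle_{\mathcal{H}}$ and $\langle f_{\perp},f_{\parallel}\rangle_{\mathcal{H}}$ both vanish by orthogonality). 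Once these sign/conjugation conventions are fixed, the proof proceeds exactly as in the real-valued representer theorem.
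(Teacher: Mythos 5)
Your proposal is correct and follows essentially the same route as the paper's proof: an orthogonal decomposition of $f$ into $\mathcal{H}_{\mathcal{S}}$ and its complement, showing the data term depends only on the projection and the regularizer is minimized by discarding the orthogonal part, then passing to the matrix form via $\mathbf{K}_{\mathcal{S}}=\mathbf{D}\mathbf{K}\mathbf{D}^{\top}$ and the reindexing between $\bm{\beta}$ and $\bm{\alpha}$. You actually spell out the reproducing-property and Pythagorean steps that the paper delegates to a citation, so no gap.
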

		\begin{proof}
			See \textbf{Appendix} \ref{appendix B}.
		\end{proof}
		
		The representer theorem states that the original signal can be reconstructed from the lower-dimensional coefficient vector $\bm{\beta}$ and the corresponding columns of the kernel matrix $\mathbf{K}$ indexed by the sampling set. The optimal $\bm{\beta}$ is obtained by solving the following optimization problem
		\[
		\bm{\beta}^{\text{opt}} = \argmin_{\bm{\beta} \in \mathbb{C}^{|\mathcal{S}|}}
		\left(\frac{1}{|\mathcal{S}|}  \left| \left| \bm{y}_{\mathcal{S}} -\mathbf{K}_{\mathcal{S}}\bm{\beta} \right| \right|^2 + \gamma \bm{\beta}^{\mathrm{H}} \mathbf{K}_{\mathcal{S}}\bm{\beta}\right).
		\]
		
		If the matrix $\mathbf{K}_{\mathcal{S}}=\mathbf{DKD}^{\top}$ if of  full rank, then the KRR estimate is given by 
		\[
		\bm{f}^{\text{opt}}_{\text{KRR}} = \mathbf{KD}^{\top}\left(\mathbf{K}_{\mathcal{S}} + \gamma \mathbf{I}\right)^{-1} \bm{y}_{\mathcal{S}}.
		\]
		
		
		\section{Kernels on Complex Manifolds for Graph Signal Reconstruction}
		\label{kernel-functions}
		Kernel methods play an important role in graph signal processing, particularly in the context of reconstructing missing or incomplete graph signals \cite{Kernel}. The kernels we explore here provide powerful tools for modeling the relationships between graph vertices and the underlying data. In this section, we introduce various types of kernels, including traditional kernels, graph-based kernels, and other specialized kernels tailored to specific graph signal characteristics.
		
		
		\subsection{Traditional Kernels}
		\label{traditional-kernels}
		Traditional kernel functions, such as the Gaussian and Laplacian kernels, are widely used in machine learning and signal processing. These kernels rely on specific metrics or distances to measure the similarity between points or signals.  In the context of complex signals on complex manifolds, two commonly used metrics are the Euclidean metric and the Hermitian metric.
		
		For complex manifolds, where each pair of points is represented by local coordinates $z_n$ and $z_m$, we assume that the corresponding complex feature vectors are denoted as $\mathbf{z}_n=\left(\mathrm{z}_n^1,\mathrm{z}_n^2,\ldots,\mathrm{z}_n^D\right)^{\top}$ and $\mathbf{z}_m=\left(\mathrm{z}_m^1,\mathrm{z}_m^2,\ldots,\mathrm{z}_m^D\right)^{\top}$. The Euclidean distance is  defined as
		\begin{equation}
			\begin{aligned}
				d_E(z_n, z_m) &= \sqrt{\|\mathbf{z}_n - \mathbf{z}_m\|^2} \\
				&= \sqrt{\sum_{i=1}^{D} |\mathrm{z}_n^i - \mathrm{z}_m^i|^2}. \label{Euclidean}
			\end{aligned}
		\end{equation}
		
		The Hermitian metric, tailored for complex manifolds, accounts for the conjugate symmetry of complex numbers. For points $z_n$ and $z_m$, the distance induced by this metric reads
		\begin{equation}
			\begin{aligned}d_{H}(z_{n},z_{m})&=\sqrt{(\mathbf{z}_{n} -\mathbf{z}_{m} )^{\mathrm{H}}\mathbf{H}(\mathbf{z}_{n} -\mathbf{z}_{m} )}\\ &=\sqrt{ h_{i\bar{j} }(\mathrm{z}^{i}_{n} -\mathrm{z}^{i}_{m} )\overline{(\mathrm{z}^{j}_{n} -\mathrm{z}^{j}_{m} )}} , \end{aligned}  \label{Hermitian}
		\end{equation}
		where the metric tensor $h_{i\bar{j}}$ encapsulates the local geometric structure of the manifold. More details can be found in \textbf{Appendix} \ref{appendix A}.
		
		Different forms of $h_{i\bar{j}}$ reflect various geometric characteristics of the manifold, such as flatness, curvature, or periodicity. The choice of $h_{i\bar{j}}$ depends on the specific application scenario and the requirements of signal processing. Table~\ref{tab} summarizes several representative types of metric tensors $h_{i\bar{j}}$ and their applications.
		
		\begin{table*}[htbp]
			\centering
			\caption{Some examples of $h_{i\bar{j}}$, where $\delta_{i\bar{j}}$ denotes the Kronecker delta, i.e., $\delta_{i\bar{j}} = 1$ if $i = j$ and $0$ otherwise.}\label{tab}
			\begin{tabular}{ccc}
				\toprule
				Type & $h_{i\bar{j}}$ Expression &  Application Scenario \\
				\midrule
				Euclidean Metric & $\delta_{i\bar{j}}$    & Flat manifolds, basic signal modeling \vspace{0.15cm} \\
				Hermitian Torus Metric  & $\delta_{i\bar{j}} r^{2(i-1)}$ & Periodic signal modeling, complex tori \vspace{0.15cm}\\
				K\"ahler Metric   & $\frac{\delta_{i\bar{j}}}{\left(1+\sum\limits_{k=1}^{D} |\mathrm{z}_{n}^{k}|^2 \right)^{2}}$   & Constant-curvature complex geometry\vspace{0.15cm}\\
				Fubini-Study Metric & $\frac{\delta_{i\bar{j}}}{1+\sum\limits_{k=1}^{D} |\mathrm{z}_{n}^{k}|^2 } - \frac{\mathrm{z}_{n}^{i} \overline{\mathrm{z}}_{n}^{j}}{\left(1+\sum\limits_{k=1}^{D} |\mathrm{z}_{n}^{k}|^2 \right)^2}$  & Projective spaces, quantum state analysis\vspace{0.15cm}\\		
				Poincar\'e Metric &  $\frac{4\delta_{i\bar{j}}}{\left(1 -\sum\limits_{k=1}^{D} |\mathrm{z}_{n}^{k}|^2\right)^2}$  & Hyperbolic geometry, unit disk modeling \vspace{0.15cm}\\
				\bottomrule
			\end{tabular}
		\end{table*}
		
		These metrics or distances provide the foundation for defining kernel functions, enabling effective GSP and reconstruction on complex manifolds.

		\subsubsection{Gaussian Kernel}
		The Gaussian kernel is a classical kernel, which has been widely used in machine learning and signal processing. In complex signal processing, the Gaussian kernel can be defined using different distances. According to Eq. \eqref{Euclidean}, the traditional Euclidean Gaussian kernel (EGK) is defined as 
		\begin{equation}
			\kappa (z_{n},z_{m})=\exp \left( -\frac{d_E^2(z_n,z_m)}{2\sigma^{2} } \right), \label{LE} 
		\end{equation}
		where parameter $\sigma^2 > 0$ controls the kernel width.
		
		We can also define Gaussian kernels based on  Hermitian metrics. The Hermitian Gaussian kernel (HGK) is defined as
		\begin{equation}
			\kappa (z_{n},z_{m}) = \exp \left( -\frac{d^2_H(z_n,z_m)}{2\sigma^{2} } \right). \label{LH}
		\end{equation}

		\subsubsection{Laplacian Kernel}
		The Euclidean Laplacian kernel (ELK) is another widely used kernel, particularly effective for signals exhibiting exponential decay or sharp transitions. It is given by
		\begin{equation}
			\kappa(z_n, z_m)=\exp\left(-\frac{d_E(z_n,z_m)}{\sigma}\right),
		\end{equation}
		where $\sigma > 0$ is a scaling parameter. Unlike the Gaussian kernel, the Laplacian kernel emphasizes the similarity between nearest neighbors, making it well suited for signals with abrupt transitions or sparsely distributed data points.
		
		Similarly,  the Hermitian Laplacian kernel (HLK) is defined as 
		\begin{equation}
			\kappa (z_{n},z_{m})=\exp \left( -\frac{d_H(z_n,z_m)}{\sigma } \right) .
		\end{equation}

		\subsubsection{Polynomial Kernels}
		Polynomial kernels (PKs) capture the interaction between points nonlinearly to a certain power, which can also be naturally extended to complex manifolds equipped with Hermitian metrics.
		
		Using Eq. \eqref{Hermitian}, PKs can be defined as
		\begin{equation}
			\kappa(z_n, z_m) = \left( \langle \mathbf{z}_n, \mathbf{z}_m \rangle + c \right)^d = \left({\mathbf{z}}_n^{\mathrm{H}}\mathbf{H}\mathbf{z}_m + c \right)^d,
		\end{equation}
		where $c \geq 0$ is a constant that controls the offset, and $d$ is the degree of the polynomial.
		
		Note that when $c = 0$ and $d = 1$, PKs reduce to the linear kernel.
		\begin{example}
			As an example, consider that the vertices embedded in the complex manifold are assigned features $\mathbf{z}_n$, defined as 
			$$\mathrm{z}_n^i = \mathrm{x}_n + \mathrm{i} \cdot \cos\left(2 \pi \frac{i}{D} \mathrm{x}_n\right),$$
			where $\mathrm{x}_n = -1 + \frac{2(n-1)}{N-1}, n = 1,2, \dots, N$ and $i = 1, 2,\dots, D$. Based on these features, the graph signal is generated using Eq. \eqref{Kalpha}. The noisy graph signals are reconstructed using the five aforementioned kernels and KRR, with $h_{i\bar{j}}$ defined under the Kähler metric and Poincaré metric. The results are summarized in Fig. \ref{fig03}.
		\end{example}
		
		\begin{figure}[htbp]
			\begin{center}
				\begin{minipage}[t]{0.49\linewidth}
					\centering
					\includegraphics[width=\linewidth]{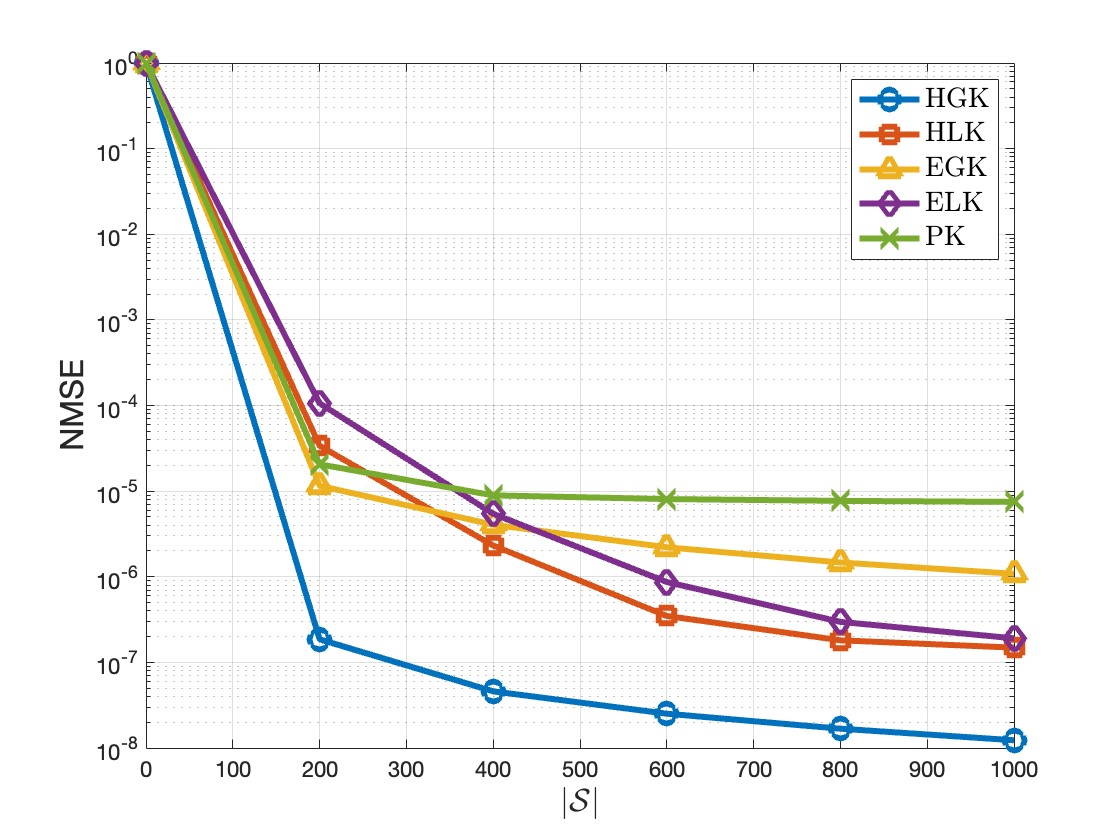}
					\parbox{4cm}{\tiny (a) Kähler metric on complex manifold, with Gaussian noise $\mathcal{N}(0, 0.1^2)$.}
				\end{minipage}
				\begin{minipage}[t]{0.49\linewidth}
					\centering
					\includegraphics[width=\linewidth]{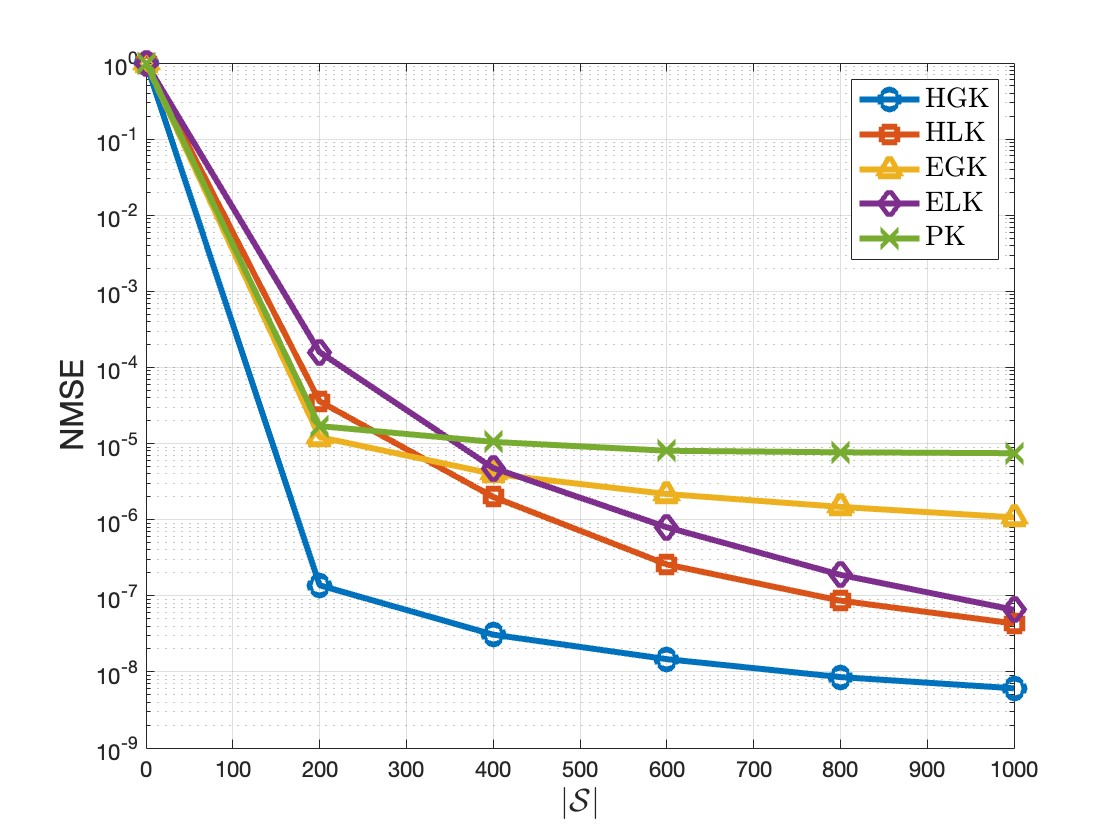}
					\parbox{4cm}{\tiny (b) Kähler metric on complex manifold, with Gaussian noise $\mathcal{N}(0, 0.05^2)$.}
				\end{minipage}
				\begin{minipage}[t]{0.49\linewidth}
					\centering
					\includegraphics[width=\linewidth]{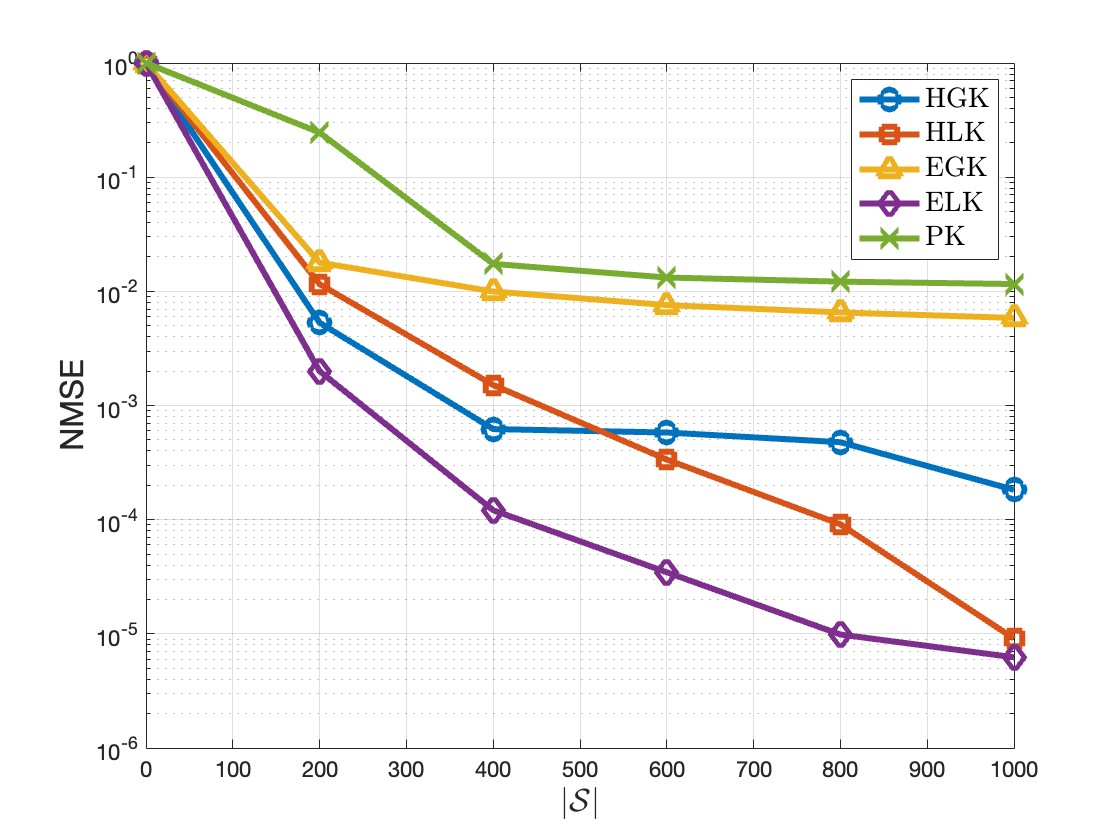}
					\parbox{4cm}{\tiny (c) Poincaré metric on complex manifold, with Gaussian noise $\mathcal{N}(0, 0.1^2)$.}
				\end{minipage}
				\begin{minipage}[t]{0.49\linewidth}
					\centering
					\includegraphics[width=\linewidth]{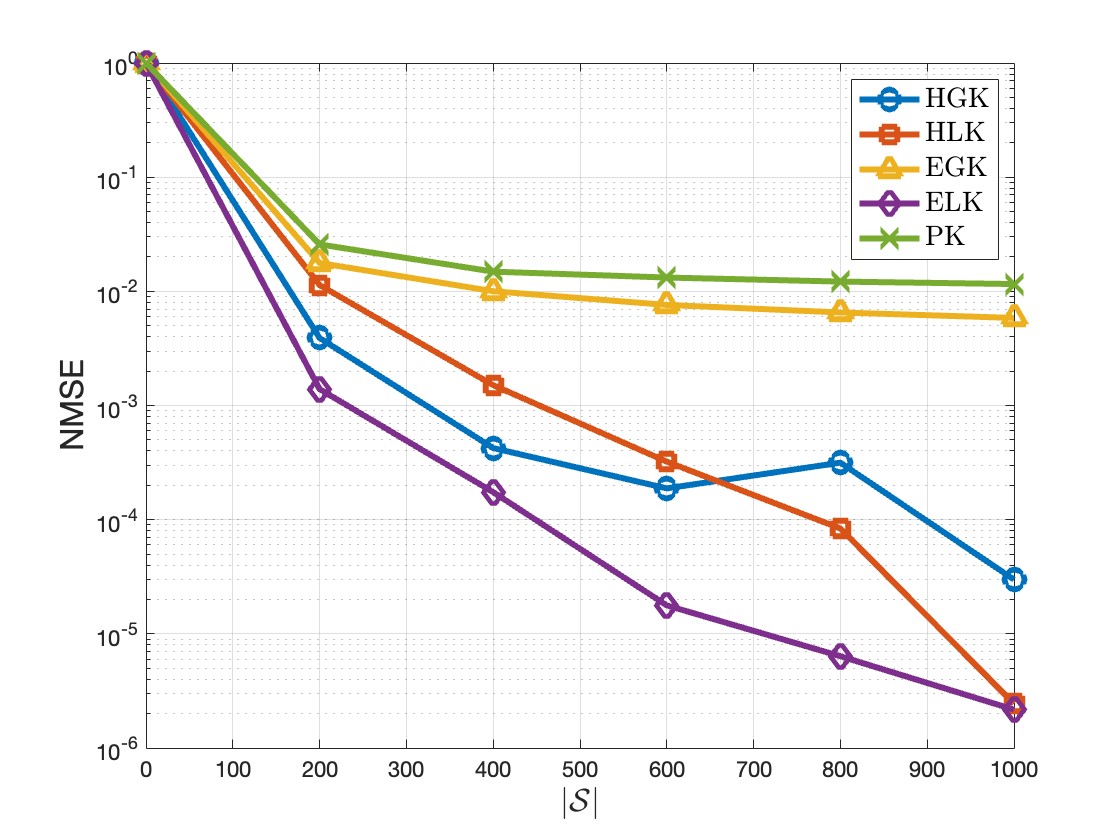}
					\parbox{4cm}{\tiny (d) Poincaré metric on complex manifold, with Gaussian noise $\mathcal{N}(0, 0.05^2)$.}
				\end{minipage}
			\end{center}
			\caption{The normalized mean squared error (NMSE) between the reconstructed and original signals is evaluated across five kernels with an increasing number of sampling nodes. The kernel parameters are set to $\sigma = 0.5$, $c = 10$, and $d = 8$, with $N = 1000$, $D = 2$, and $\gamma =0.01$. To account for the randomness in sampling node selection, each experiment is repeated 20 times, and the average results are reported.}
			\label{fig03}
		\end{figure}
		
		Embedding vertices into specific complex manifolds demonstrates that kernels based on Hermitian metrics can outperform those based on the Euclidean metric. Thus, selecting an appropriate manifold or metric is crucial in kernel-based graph signal reconstruction. 
		
		
		\subsection{Graph-based Kernels}
		\label{graph-kernels}
		Embedding graph signals into a complex manifold and applying traditional kernels for graph signal reconstruction typically relies on the similarity of features between points on the manifold. However, for a given graph, where explicit relationships exist between vertices, graph kernels are commonly used to leverage the graph topological structure, facilitating smooth estimations \cite{GFTlaplace}.

		\subsubsection{Graph Laplacian Kernel}
		On complex manifolds, the similarity between adjacent vertices---or vertices that are close in terms of geodesic distance---can be used to construct graph kernels that capture the topological structure of the graph \cite{GFTlaplace,GFTadjacency2,Goverview}. The graph Laplacian kernel (GLK) fis defined by applying a specific function to the Laplacian matrix $\mathbf{L}$. It can be indirectly constructed using the traditional kernel method described in the previous section. 
		
		We first construct the adjacency and degree matrices of a given graph:
		\begin{equation}
			\begin{aligned}
				w_{nm}& = \kappa(z_n, z_m),\\
				d^{\mathrm{deg}}_n &= \sum_m w_{nm} = \sum_m \kappa(z_n, z_m), \label{wnm}
			\end{aligned}
		\end{equation}
		where $\kappa(z_n, z_m)$ is a Gaussian kernel, either  \eqref{LE} or \eqref{LH}. Then, we obtain the graph Laplacian matrix 
		\begin{equation}
			{\mathbf{L}_{nm}} = \begin{cases}
				d^{\mathrm{deg}}_n, & \text{if } n = m, \\
				-w_{nm}, & \text{else. }
			\end{cases}
		\end{equation}
		
		Before defining the GLK, we  introduce the graph fractional Fourier transform (GFRFT) \cite{GFRFT} of a graph signal $\bm{f} \in \mathcal{H}$, \begin{equation}
			\hat{\bm{f}} = (\mathbf{U}^{\mathrm{H}})^a \bm{f}, \quad a \in \mathbb{R}, \label{GFRFT}
		\end{equation}
		where $\mathbf{U}$ represents  the matrix of eigenvectors derived from the eigenvalue decomposition of the Laplacian matrix $\mathbf{L}$\cite{GFRFTspectral}, namely
		\begin{equation}
			\mathbf{L} = \mathbf{U \Lambda U}^{\mathrm{H}},
			\label{La}
		\end{equation}
		where the matrix $\mathbf{U} = [{\bm{u}}_1,\bm{u}_2, \dots, {\bm{u}}_N]$ is composed of the eigenvectors, and $\mathbf{\Lambda} = \text{diag}(\lambda_1,\lambda_2, \dots, \lambda_N)$, with eigenvalues $\lambda_1, \lambda_2,\dots, \lambda_N$ arranged in descending order.
		
		Through GFRFT, we are able to not only analyze the frequency-domain characteristics of complex signals but also transform these signals in a flexible manner, unveiling the deeper structural relationships of the graph signal on the complex manifold. The graph fractional Laplacian matrix demonstrates superior performance in capturing prior information about complex signals \cite{GFRFT,HGFRFT,GFRFTspectral,GLCT,newGLCT}.
		
		Building on the eigenvalue decomposition of the Laplacian matrix in Eq. \eqref{La}, the kernel matrix derived from the GLK is given by 
		\begin{equation}
			\begin{aligned}
				\mathbf{K}&=r^{\dag }\left( \mathbf{L}^{a}\right)  \\
				&=r^{\dag }\left( \mathbf{U}^{a}\mathbf{\Lambda}^{a} ( \mathbf{U}^{\mathrm{H}})^{a}  \right)  \\ & =\mathbf{U}^{a}r^{\dag }\left( \mathbf{\Lambda}^{a} \right)  ( \mathbf{U}^{\mathrm{H}})^{a},\label{GLK}
			\end{aligned}
		\end{equation}  
		where $r^\dagger(\mathbf{\Lambda}^{a})$ applies a user-defined non-negative mapping $ r(\lambda^{a})$ to the eigenvalues $\lambda^a$ of $\mathbf{L}^{a}$. Typical choices for $ r(\lambda^{a})$ include \cite{GFTlaplace,Diffusionkernel,regularizationkernel}
		\begin{itemize}
			\item{Diffusion kernel: $r(\lambda^{a}) = \exp(-\sigma^2 \lambda^{a} / 2)$,}
			\item{Random walk kernel: $ r(\lambda^a) = (b - \lambda^{a})^{-p} $, where $ b \geq 2 $,}
			\item{Laplacian regularization: $ r(\lambda^a) = 1 + \sigma^2 \lambda^{a} $.}
		\end{itemize}
		These kernel functions leverage the spectral properties of the graph fractional order to capture its structural information, enabling smooth estimations of graph signals, particularly in complex manifolds.
		
		When the GLK is incorporated into the regularizer of an optimization problem, we obtain
		\begin{equation}
			\begin{aligned}
				\bm{f}^{\mathrm{H}} \mathbf{K}^{\dagger} \bm{f} &= \bm{f}^{\mathrm{H}} r\left( \mathbf{L}^{a} \right) \bm{f}\\
				&= \bm{f}^{\mathrm{H}} \mathbf{U}^{a} r\left( \mathbf{\Lambda}^{a} \right) (\mathbf{U}^{\mathrm{H}})^{a} \bm{f} \\
				&= \hat{\bm{f}}^{\mathrm{H}} r\left( \mathbf{\Lambda}^{a} \right) \hat{\bm{f}}, 
			\end{aligned}
		\end{equation}  
		which shows that this term can be expressed in terms of the transformed signal in the GFRFT domain. When this regularizer is specialized to the form $\bm{f}^{\mathrm{H}} \mathbf{L}^{a} \bm{f}$, we obtain what is referred to as the fractional-order Laplacian regularizer.
		\begin{proposition}\label{pro:smo}
			For the smoothness of a graph embedded in a complex manifold, we have
			\begin{equation}\label{eq:flf}
				\bm{f}^{\mathrm{H}} \mathbf{L}^{a} \bm{f} = \frac{1}{2} \sum^{N}_{n=1} \sum^{N}_{m=1} w^{a}_{nm} \left( f\left( z_{n} \right) - f\left( z_{m} \right) \right)^2,
			\end{equation}
			where $w^{a}_{nm}$ is the kernel  defined in Eq. \eqref{wnm}.
		\end{proposition}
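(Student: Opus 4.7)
The plan is to recognize $\mathbf{L}^{a}$ as a Hermitian, Laplacian-like operator on the graph and then exploit the classical identity that converts a Laplacian quadratic form into a sum of squared edge differences. The argument will proceed in three stages.

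\textit{Stage 1 (structural properties of $\mathbf{L}^{a}$).} I would first verify that $\mathbf{L}^{a}$ inherits from $\mathbf{L}$ both Hermiticity and the zero-row-sum condition $\mathbf{L}^{a}\mathbf{1}=\mathbf{0}$. Hermiticity is immediate from the representation $\mathbf{L}^{a}=\mathbf{U}^{a}\mathbf{\Lambda}^{a}(\mathbf{U}^{\mathrm{H}})^{a}$ given in Eq.~\eqref{GLK}, together with $(\mathbf{U}^{a})^{\mathrm{H}}=(\mathbf{U}^{\mathrm{H}})^{a}$. The row-sum property rests on the fact that, for a connected graph, the constant vector $\mathbf{1}$ lies in the $\lambda=0$ eigenspace of $\mathbf{L}$ and $0^{a}=0$ for $a>0$, so that $\mathbf{\Lambda}^{a}$ annihilates the associated coordinate once $(\mathbf{U}^{\mathrm{H}})^{a}$ is applied to $\mathbf{1}$.

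\textit{Stage 2 (generalized Laplacian decomposition).} Next I would set $w^{a}_{nm}:=-(\mathbf{L}^{a})_{nm}$ for $n\neq m$ and invoke the zero-row-sum condition to conclude $(\mathbf{L}^{a})_{nn}=\sum_{m\neq n}w^{a}_{nm}$. This produces the decomposition $\mathbf{L}^{a}=\mathbf{D}_{a}-\mathbf{W}_{a}$, in which the off-diagonal entries play the role of fractional edge weights and the diagonal collects the corresponding fractional degrees, exactly mirroring the standard $a=1$ case of Eq.~\eqref{wnm}.

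\textit{Stage 3 (symmetrization).} Substituting the decomposition into $\bm{f}^{\mathrm{H}}\mathbf{L}^{a}\bm{f}$ and using the Hermitian symmetry $w^{a}_{nm}=\overline{w^{a}_{mn}}$ to relabel indices, a short calculation gives
\[
\bm{f}^{\mathrm{H}}\mathbf{L}^{a}\bm{f}
=\sum_{n,m}w^{a}_{nm}\,|f(z_{n})|^{2}-\sum_{n,m}w^{a}_{nm}\,\overline{f(z_{n})}f(z_{m})
=\tfrac{1}{2}\sum_{n,m}w^{a}_{nm}\,|f(z_{n})-f(z_{m})|^{2},
\]
which is precisely Eq.~\eqref{eq:flf} once the squared difference is read in its Hermitian absolute-value sense for complex signals.

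The hard part is Stage~1: justifying the zero-row-sum property for the non-standard fractional power in Eq.~\eqref{GLK} requires that $(\mathbf{U}^{\mathrm{H}})^{a}$ send $\mathbf{1}$ into a vector supported on the zero-eigenvalue coordinate of $\mathbf{\Lambda}^{a}$, which depends on a consistent branch choice for the fractional powers of $\mathbf{U}$. Once this structural fact is in hand, Stages~2 and~3 are routine bookkeeping.
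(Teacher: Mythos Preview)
Your Stage~3 is precisely the paper's argument: Appendix~C simply expands the right-hand side of Eq.~\eqref{eq:flf}, uses the symmetry of $w^{a}_{nm}$ to combine the two diagonal sums into $\sum_{n}(d^{\mathrm{deg}}_{n})^{a}|f(z_{n})|^{2}$, rewrites this as $\bm{f}^{\mathrm{H}}(\mathbf{D}^{\mathrm{deg}})^{a}\bm{f}-\bm{f}^{\mathrm{H}}\mathbf{W}^{a}\bm{f}$, and declares the result equal to $\bm{f}^{\mathrm{H}}\mathbf{L}^{a}\bm{f}$. So the computational heart of the two proofs is identical.

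Where you differ is in Stages~1 and~2. The paper does not carry them out at all: it never argues that the spectrally defined $\mathbf{L}^{a}=\mathbf{U}^{a}\mathbf{\Lambda}^{a}(\mathbf{U}^{\mathrm{H}})^{a}$ has zero row sums, nor does it derive the decomposition $\mathbf{L}^{a}=(\mathbf{D}^{\mathrm{deg}})^{a}-\mathbf{W}^{a}$; it simply invokes that identity in the last line. In this sense your proposal is more scrupulous than the published proof. Note, however, that your Stage~2 \emph{defines} $w^{a}_{nm}:=-(\mathbf{L}^{a})_{nm}$, whereas the proposition fixes $w^{a}_{nm}$ via Eq.~\eqref{wnm} (the kernel weights, with the superscript indicating a power). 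These two quantities coincide only under the identification $\mathbf{L}^{a}=(\mathbf{D}^{\mathrm{deg}})^{a}-\mathbf{W}^{a}$ that the paper treats as given; your structural argument would establish a Laplacian-type splitting but not, without further work, that the resulting off-diagonal entries are exactly the $w^{a}_{nm}$ of Eq.~\eqref{wnm}. The paper sidesteps this by tacitly reading $\mathbf{L}^{a}$ as the entrywise-powered combination rather than the spectral power, which is why its proof is so short.
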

		\begin{proof}
			See \textbf{Appendix} \ref{appendix C}.
		\end{proof}
		
		The smoothness measure depends on both the spectral components of the signal (i.e., eigenvectors and eigenvalues) and the fractional power $a$. Incorporating $a$ amplifies the influence of eigenvalues, especially in the high-frequency domain, where it suppresses high-frequency components to enhance smoothness \cite{GFRFTspectral}. For the Laplacian matrix with fractional power $a$, the kernel function $r(\lambda^a)$ imposes a stronger smoothness penalty on low-frequency components, while penalizing high-frequency components less, resulting in a smoother signal on the complex manifold.
		
		Furthermore, using the inverse transform of Eq. \eqref{GFRFT}, we have
		\begin{equation}
			\bm{f} = \mathbf{U}^{a} \hat{\bm{f}} = \langle \hat{\bm{f}}, \bm{u}^{a} \rangle = \sum^{N}_{n=1} \hat{f}_{n} \bm{u}^{a}_{n},  
		\end{equation}  
		indicating that the graph signal $\bm{f}$ in the RKHS can be decomposed into components embedded within the eigenvectors $\mathbf{U}^{a}$ of the Laplacian matrix.

		\subsubsection{Graph Bandlimited Kernel}
		Graph bandlimited kernels (GBKs) effectively characterize the spectral characteristics of signals. 
		
		In GSP, graph bandlimited signals are defined via localization operators. For a vertex subset $\mathcal{S} \subseteq \mathcal{V}$ and a spectral subset $\mathcal{F} \subseteq \hat{\mathcal{G}}$, where $\hat{\mathcal{G}} = \{1,2, \ldots, N\}$ represents the set of all spectral indices, the graph localization operators are defined as 
		\begin{equation}
			\mathbf{P} = \text{diag}(p_1, p_2, \dots, p_N), \ \text{and} \ \mathbf{B}^{a} = \mathbf{U}^{a} \mathbf{\Sigma} (\mathbf{U}^{\mathrm{H}})^{a}, \label{PBa}
		\end{equation}
		where $p_i = 1$ if $i \in \mathcal{S}$, and otherwise $p_i = 0$. Furthermore, the matrix $ \mathbf{\Sigma} = \text{diag}(\sigma_1, \sigma_2, \dots, \sigma_N)$ is defined such that $\sigma_i = 1$ if $i \in \mathcal{F}$, and otherwise $\sigma_i = 0$.
		
		\begin{remark}
			It should be noted that the vertex localization operator $\mathbf{P}=\mathbf{D}^{\top}\mathbf{D}$, where $\mathbf{D}$ is the sampling operator given by  Eq. \eqref{D}. The definition of the fractional spectral localization operator $\mathbf{B}^{a}$ is analogous to the graph Laplacian kernel. Specifically, it coincides with the Laplacian kernel when $r^\dagger(\mathbf{\Lambda}^a) = \mathbf{\Sigma}$.
		\end{remark}
		
		The operators $\mathbf{P}$ and $\mathbf{B}^{a}$ are Hermitian positive semi-definite, with spectral norms equal to $1$.
		
		\begin{lem}
			\label{lem1}
			A vector $\bm{f}$ is fully localized in the vertex set $\mathcal{S}$ and the spectral set $\mathcal{F}$ if and only if 
			\[
			\lambda_{\max} \left( \mathbf{B}^{a} \mathbf{P} \mathbf{B}^{a} \right) = 1
			\]
			satisfies.
			In this case, $\bm{f}$ is an eigenvector corresponding to the unit eigenvalue and satisfies $\mathbf{P} \bm{f} = \bm{f}$, and $\mathbf{B}^{a} \bm{f} = \bm{f}$.
		\end{lem}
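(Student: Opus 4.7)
The plan is to exploit the fact that both $\mathbf{P}$ and $\mathbf{B}^{a}$ are Hermitian \emph{orthogonal projections}, and then apply the standard principle that equality in a projection-norm bound forces the vector to lie in the projection's range. First I would record the projection property: $\mathbf{P}^{2}=\mathbf{P}$ is immediate from $p_{i}\in\{0,1\}$, and
\[
(\mathbf{B}^{a})^{2}=\mathbf{U}^{a}\mathbf{\Sigma}(\mathbf{U}^{\mathrm{H}})^{a}\mathbf{U}^{a}\mathbf{\Sigma}(\mathbf{U}^{\mathrm{H}})^{a}=\mathbf{U}^{a}\mathbf{\Sigma}^{2}(\mathbf{U}^{\mathrm{H}})^{a}=\mathbf{B}^{a},
\]
using $\mathbf{\Sigma}^{2}=\mathbf{\Sigma}$ together with the unitarity of $\mathbf{U}^{a}$ (inherited from the Hermitian eigendecomposition of $\mathbf{L}$ via functional calculus). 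Hermiticity of $\mathbf{P}$ and $\mathbf{B}^{a}$ is clear from the definition in Eq.~\eqref{PBa}.

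For the direction $(\Rightarrow)$, if $\bm{f}$ is fully localized then $\mathbf{P}\bm{f}=\bm{f}$ and $\mathbf{B}^{a}\bm{f}=\bm{f}$, so applying the three operators in turn yields $\mathbf{B}^{a}\mathbf{P}\mathbf{B}^{a}\bm{f}=\bm{f}$; hence $1$ is an eigenvalue. Combined with the submultiplicative bound $\lambda_{\max}(\mathbf{B}^{a}\mathbf{P}\mathbf{B}^{a})\le\|\mathbf{B}^{a}\|\,\|\mathbf{P}\|\,\|\mathbf{B}^{a}\|=1$ coming from the spectral-norm claim already stated, equality is forced.

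For the direction $(\Leftarrow)$, pick a unit eigenvector $\bm{f}$ with $\mathbf{B}^{a}\mathbf{P}\mathbf{B}^{a}\bm{f}=\bm{f}$ and set $\bm{g}:=\mathbf{B}^{a}\bm{f}$. Using Hermiticity of $\mathbf{B}^{a}$ and $\mathbf{P}=\mathbf{P}^{2}$,
\[
1=\langle\bm{f},\bm{f}\rangle=\langle\mathbf{B}^{a}\mathbf{P}\mathbf{B}^{a}\bm{f},\bm{f}\rangle=\langle\mathbf{P}\bm{g},\mathbf{P}\bm{g}\rangle=\|\mathbf{P}\bm{g}\|^{2}.
\]
Chaining the projection inequalities $\|\mathbf{P}\bm{g}\|\le\|\bm{g}\|=\|\mathbf{B}^{a}\bm{f}\|\le\|\bm{f}\|=1$, every term must equal $1$. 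The equality $\|\mathbf{B}^{a}\bm{f}\|=\|\bm{f}\|$ for the orthogonal projection $\mathbf{B}^{a}$ forces $\bm{f}\in\operatorname{Range}(\mathbf{B}^{a})$, i.e.\ $\mathbf{B}^{a}\bm{f}=\bm{f}$, so $\bm{g}=\bm{f}$. Then $\|\mathbf{P}\bm{g}\|=\|\bm{g}\|$ analogously gives $\mathbf{P}\bm{f}=\bm{f}$, establishing full localization.

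The only step requiring genuine care is the ``equality in the projection-norm bound implies fixed vector'' implication: it is proved by the orthogonal decomposition $\bm{f}=\mathbf{B}^{a}\bm{f}+(\mathbf{I}-\mathbf{B}^{a})\bm{f}$ combined with Pythagoras, which shows that norm preservation forces the complementary component to vanish. Beyond that, everything reduces to algebra on the projection identities $\mathbf{P}^{2}=\mathbf{P}$ and $(\mathbf{B}^{a})^{2}=\mathbf{B}^{a}$, so I expect no further obstacles.
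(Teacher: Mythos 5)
Your proof is correct, and for the converse it is actually tighter than the paper's own argument. The forward direction is the same in both: full localization gives $\mathbf{B}^{a}\mathbf{P}\mathbf{B}^{a}\bm{f}=\bm{f}$, and the bound $\lambda_{\max}\le\|\mathbf{B}^{a}\|\|\mathbf{P}\|\|\mathbf{B}^{a}\|=1$ forces equality. For the converse, the paper first applies $\mathbf{B}^{a}$ to the eigenvector equation and uses $(\mathbf{B}^{a})^{2}=\mathbf{B}^{a}$ to get $\mathbf{B}^{a}\bm{f}=\bm{f}$ (same conclusion you reach), but then establishes $\mathbf{P}\bm{f}=\bm{f}$ via a Rayleigh--Ritz identity written in terms of $\mathbf{B}^{-a}$ --- an inverse that does not exist when $\mathbf{B}^{a}$ is a proper (singular) projection, so that step is at best heuristic. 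Your route replaces it with the chain $1=\|\mathbf{P}\mathbf{B}^{a}\bm{f}\|^{2}\le\|\mathbf{B}^{a}\bm{f}\|^{2}\le\|\bm{f}\|^{2}=1$ and the Pythagorean argument that equality in a projection-norm bound forces the vector into the projection's range; this uses only the Hermitian idempotence of $\mathbf{P}$ and $\mathbf{B}^{a}$ (which the paper itself asserts) and is fully rigorous. The one cosmetic caveat is that $\|\mathbf{B}^{a}\|=\|\mathbf{P}\|=1$ presumes $\mathcal{F}$ and $\mathcal{S}$ are nonempty, which is implicit in the paper's claim that both operators have unit spectral norm.
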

		\begin{proof}
			\textbf{Appendix} \ref{appendix D}.
		\end{proof}
		
		Lemma \ref{lem1} implies that, if a graph signal $\bm{f}$ is bandlimited, it can be written  as
		\begin{equation}
			\bm{f} = \mathbf{B}^{a}\bm{f} = \mathbf{U}^a \mathbf{\Sigma} (\mathbf{U}^{\mathrm{H}})^a \bm{f} = \mathbf{U}^a \mathbf{\Sigma} \hat{\bm{f}} = \mathbf{U}^a_{\mathcal{F}} \hat{\bm{f}}_{\mathcal{F}}, \label{Baf}
		\end{equation}
		where $\mathbf{U}^a_{\mathcal{F}}$ consists of the columns of $\mathbf{U}^a$ corresponding to the indices in $\mathcal{F}$, and $\hat{\bm{f}}_\mathcal{F}$ is the vector of coefficients $\hat{\bm{f}}_i$ for $i \in \mathcal{F}$.
		
		The optimization problem \eqref{fopt2} can then be rewritten in the transform domain. Firstly, for the regularization term, the construction of the GBK is based on the GLK in Eq. \eqref{GLK}, defined by
		\[
		r_{\mu}(\lambda^a_i) =
		\begin{cases}
			\frac{1}{\mu}, & \text{if } i \in\mathcal{F}, \\
			\mu, & \text{else}.
		\end{cases}
		\]
		For $\hat{\bm{f}}_i$, since $i \notin \mathcal{F}$, a penalty $\mu$ is imposed to make it less influential, thus promoting bandlimited estimation. Set $1/\mu$ for $i \in \mathcal{F}$ to ensure that $\mathbf{K}_\mu$ is non-singular \cite{Kernel}.
		
		Since both $\mathbf{P}$ and $\mathbf{B}^a$ are self-adjoint, and they satisfy $\mathbf{P}\bm{f} = \bm{f} = \mathbf{B}^a \bm{f}$, the optimization problem can be rewritten as
		\begin{equation}
			\bm{f}^{\text{opt}} = \mathbf{U}^a_{\mathcal{F}} \argmin_{\hat{\bm{f}}_{\mathcal{F}} \in \mathbb{C}^{|\mathcal{F}|}} \left( \| \bm{y}_{\mathcal{S}} - \mathbf{D} \mathbf{U}^a_{\mathcal{F}} \hat{\bm{f}}_\mathcal{F} \|^2+\gamma \hat{\bm{f}}^{\mathrm{H}}_{\mathcal{F}}r_{\mu}(\mathbf{\Lambda}^{a})\hat{\bm{f}}_{\mathcal{F}}\right), \label{bandlimitedOPT}
		\end{equation}
		where $\mathbf{\Lambda}^a = \text{diag}(\lambda_1^a, \lambda_2^a, \dots, \lambda_N^a)$. Stationary condition of the optimization problem Eq. \eqref{bandlimitedOPT} yields the optimal graph bandlimited signal, which is 
		\[
		\begin{aligned}\bm{f}^{\text{opt} }_{\text{rr} } &=\mathbf{U}^{a}_{\mathcal{F} } \hat{\bm{f} }_{\mathcal{F} } \\ &=\mathbf{U}^{a}_{\mathcal{F} } \left( (\mathbf{U}^{a}_{\mathcal{F} } )^{\mathrm{H} }\mathbf{P} \mathbf{U}^{a}_{\mathcal{F} } +\gamma r_{\mu }(\mathbf{\Lambda }^{a} )\right)^{-1}  (\mathbf{U}^{a}_{\mathcal{F} } )^{\mathrm{H} }\mathbf{D}^{\top} \bm{y}_{\mathcal{S} }. \end{aligned} 
		\]
		

		\section{Multi-Kernel Learning on Complex Manifolds}
		\label{MKL}
		Signal distributions on complex manifolds often exhibit both local and global characteristics, making it challenging to fully capture their diverse features using a single kernel. Moreover, in practical applications, uncertainties in kernel types, parameter selection, and  bandwidth of bandlimited graph signals add further complexity. To address these issues, MKL provides a versatile framework by combining multiple kernels to adapt to intricate geometrical structures \cite{MKL}.
		
		This section introduces a novel graph signal reconstruction method based on MKL in complex manifolds. Leveraging the MKL framework, the proposed approach integrates inner and outer optimizations to preserve sparse regularization and multi-kernel weighting strategies while reducing  optimization dimensions. This design minimizes computational complexity, ensuring robust and efficient reconstruction.
		
		In a complex manifold $\mathcal{M}$, we define a family of kernel functions $\{\kappa_{\ell}(z, z')\}_{\ell=1}^L$, where each kernel $\kappa_{\ell}$ corresponds to a RKHS $\mathcal{H}_\ell$. These kernels may capture various signal characteristics, such as local, global, and spectral properties. For example, Gaussian kernels are well-suited for modeling local features, while bandlimited kernels excel in representing graph spectral domain information.
		
		A graph signal $f: \mathcal{V} \to \mathbb{C}$ is represented as a weighted combination of these kernels, namely, 
		\begin{equation}
			f(z) = \sum_{\ell=1}^L f_\ell(z), \quad f_\ell \in \mathcal{H}_\ell,
		\end{equation}
		where, by using the Theorem \ref{thm1}, each $f_\ell$ can be formulated over the observed subset $\mathcal{S} \subseteq \mathcal{V}$ as
		\begin{equation}
			f_\ell(z) = \sum_{s=1}^{|\mathcal{S}|} (\beta_{\ell})_{s} \kappa_\ell(z, z_s)
		\end{equation} 
		with $\beta_{\ell}$ the optimization parameters and $\{z_s\}_{s=1}^{|\mathcal{S}|}\subseteq \mathcal{S}$ the observable graph vertices embedded in the complex manifold. The corresponding observed values are given by $\bm{y}_{\mathcal{S}} = (y_1, y_2, \ldots, y_{|\mathcal{S}|})^\top$, while $\bm{f}_{\mathcal{S}} = (f(z_1), f(z_2), \ldots, f(z_{|\mathcal{S}|}))^\top$ denotes the ground-truth values at these vertices.
		
		The goal is to reconstruct the complete signal $f$ in all vertices. This  is formulated as an optimization problem
		\[
		\argmin_{\{f_\ell\}_{\ell=1}^L}\left( \frac{1}{|\mathcal{S}|} \sum_{s=1}^{|\mathcal{S}|} \left( y_s - \sum_{\ell=1}^{L} f_{\ell}(z_s) \right)^2 + \gamma \sum_{\ell=1}^{L} \|f_\ell\|^{2}_{\mathcal{H}_\ell}\right).
		\]
		
		By substituting $f_\ell$ with the kernel expansions inside, the problem becomes to find  kernel weights $\{\bm{\beta}_\ell\}_{\ell=1}^L$ that minimize the objective function
		\begin{equation}
			\argmin_{\left\{ \bm{\beta}_{\ell } \right\}^{L}_{\ell =1} } 
			\frac{1}{|\mathcal{S} |} \left\| \bm{y}_{\mathcal{S} } -\sum^{L}_{\ell =1} (\mathbf{K}_{\ell})_{\mathcal{S}} \bm{\beta}_{\ell } \right\|^{2} +\gamma \sum^{L}_{\ell =1} \bm{\beta}^{\mathrm{H}}_{\ell } (\mathbf{K}_{\ell})_{\mathcal{S}}\bm{\beta}_{\ell },\label{eqMKL}
		\end{equation}
		where $(\mathbf{K}_{\ell})_{\mathcal{S}}=\mathbf{D}\mathbf{K}_{\ell } \mathbf{D}^{\top }$. 
		
		Due to the large number of variables in Eq. \eqref{eqMKL}, totaling $ |\mathcal{S}|\times L$, the computational complexity is significant. To mitigate this, we adopt a nested optimization framework that reduces the total number of variables while preserving the original sparse regularization. 
		
		In the optimization, let the kernel weight vector be $\bm{\omega}= \left(\omega_{1} ,\omega_2,\ldots ,w_{L}\right)^{\top}$, and select the appropriate kernel combination from the predefined kernel set $\{\mathbf{K}_\ell\}_{\ell=1}^L$ after optimization. The signal representation vector $\bm{\beta}$ is optimized over the combined kernel  $\mathbf{K}(\bm{\omega}) = \sum_{\ell=1}^L \omega_\ell \mathbf{K}_\ell$ to estimate the signal directly. For notational simplicity, we denote $\mathbf{K}_{\mathcal{S}}(\bm{\omega}) = \mathbf{D}\mathbf{K}(\bm{\omega})\mathbf{D}^{\top}$. To ensure robust kernel-based reconstruction with enhanced sparsity control, we incorporate both an explicit regularization term and a geometric constraint into the optimization problem. The reformulated objective function is given by
		\begin{equation}
			\begin{aligned}
				\left( \bm{\omega}, \bm{\beta} \right) &= \argmin_{\bm{\omega}, \bm{\beta}} \Big( \frac{1}{|\mathcal{S}|} \left\| \bm{y}_{\mathcal{S}} - \mathbf{K}_{\mathcal{S}}(\bm{\omega}) \bm{\beta} \right\|^2 \\
				&\quad \quad \quad \quad \quad + \gamma \bm{\beta}^{\mathrm{H}} \mathbf{K}_{\mathcal{S}}(\bm{\omega}) \bm{\beta} 
				+ \nu  \|\bm{\omega}\|_1\Big), \label{eqMKL2}
			\end{aligned}
		\end{equation}
		subject to the constraint
		\[
		\bm{\omega} \in \{\bm{\omega} \mid \bm{\omega} \geq 0, \|\bm{\omega} - \bm{\omega}_0\|_1 \leq R\},
		\]
		where $\nu  > 0$ controls the trade-off between sparsity and reconstruction accuracy, while the $\ell_1$-ball constraint ensures that the kernel weights $\bm{\omega}$ remain within a feasible search space centered at $\bm{\omega}_0$ with radius $R > 0$.
		
		Consequently, the MKL-based signal estimation on the graph is given by
		\[
		\bm{f}^{\text{opt}}=\mathbf{K}_{\mathcal{S}}(\bm{\omega})\bm{\beta}=\mathbf{D}\mathbf{K}(\bm{\omega})\mathbf{D}^{\top} \bm{\beta}.
		\]
		
		The optimization for $\bm{\omega}$ has a dimension of $L$, while $\bm{\beta}$ is of dimension $|\mathcal{S}|$. The degree of freedom is reduced to $L + |\mathcal{S}|$. However, \eqref{eqMKL2} introduces sparsity through regularization, partially retaining the characteristics of \eqref{eqMKL}. The proposed algorithm is summarized in Algorithm \ref{alg1}.
		
		\begin{algorithm}[htbp]
			\caption{\footnotesize{Sparse Kernel-Weighted Interpolation with $\ell_1$-Ball Constraint}}\label{alg1}
			\footnotesize
			\begin{algorithmic}
				\REQUIRE Initial kernel weights $\bm{\omega}^{(0)}$, kernel matrices $\{\mathbf{K}_\ell\}_{\ell=1}^L$, regularization parameters $\gamma, \nu $, step size $\eta$, tolerance $\epsilon$, radius $R > 0$.
				\STATE  \textbf{Initialization}:
				\STATE  \quad $\bm{\beta}^{(0)} = (\mathbf{K}(\bm{\omega}^{(0)}) + \gamma |\mathcal{S}| \mathbf{I})^{-1} \bm{y}_{\mathcal{S}}$
				\STATE  \quad $k = 0$
				\REPEAT
				\STATE  \textbf{Step 1: Update kernel weights $\bm{\omega}$}:
				\[
				\bm{\omega}^{(k+1)} = ( \bm{\omega}, \bm{\beta}^{(k)}) \quad \text{(cf. Eq.  \eqref{eqMKL2})}
				\]
				\vspace{-0.3cm}
				\STATE \quad subject to $\bm{\omega}^{(k+1)}  \geq 0$ and $\|\bm{\omega}^{(k+1)}  - \bm{\omega}^{(0)} \|_1 \leq R$.
				\STATE \quad Solve using a sparse regularization technique while enforcing the $\ell_1$-ball constraint.
				\STATE \textbf{Step 2: Update signal representation $\bm{\beta}$}:
				\[
				\bm{\beta}^{(k+1)} = \eta \bm{\beta}^{(k)} + (1 - \eta) \left(\mathbf{K}(\bm{\omega}^{(k+1)}) + \gamma |\mathcal{S}| \mathbf{I}\right)^{-1} \bm{y}_\mathcal{S}
				\]
				\vspace{-0.2cm}
				\STATE \textbf{Step 3: Check convergence condition}:
				\STATE \quad If $\|\bm{\beta}^{(k+1)} - \bm{\beta}^{(k)}\| < \epsilon$, terminate iteration.
				\STATE  $k \gets k + 1$
				\UNTIL{Convergence condition is satisfied}
				\ENSURE Optimized kernel weights $\bm{\omega}$ and signal representation $\bm{\beta}$.
			\end{algorithmic}
		\end{algorithm}

		\begin{remark}  
			Although the objective function is convex with respect to each individual function $f_\ell$, it is not jointly convex in all $\{f_\ell\}_{\ell=1}^L$. Consequently, the algorithm may converge to a local optimum \cite{Kernel,convex}. The choice of $\gamma$ and $\nu$ significantly impacts the results and should be determined via cross-validation or task-specific methodologies.  
		\end{remark}

		\section{Numerical Experiments }
		In this section, we compare the proposed methods with competing alternatives through numerical simulations on both synthetic and real-world datasets. Performance metrics are averaged across multiple realizations of the signal $\bm{f}$, noise $\bm{n}$ (applicable only to synthetic-data experiments), and sampling sets $\mathcal{S}$. The sampling sets are uniformly drawn at random without replacement from $\{1,2, \dots, N\}$, ensuring comprehensive evaluation under varying conditions.
		

		\subsection{Simulation on Synthetic Signals} \label{sec6.1}
		Here we evaluate the performance of the proposed RKHS methods on complex manifolds for graph signal reconstruction using synthetic data. The experiments are divided into two parts: one focuses on the effects of different kernel functions on complex manifolds, and the other compares estimation methods of bandlimited and non-bandlimited signals. All experiments are repeated under various settings, and reconstruction performance is measured using the NMSE between the original and reconstructed signals.
		
		\subsubsection{Bandlimited Signal Estimation Methods}
		Bandlimited graph signals are generated on graphs embedded in complex manifolds. Specifically, we consider two graphs, each with $N = 200$ nodes, generated via the Swiss roll and community models \cite{GSPBOX} to reflect manifold structures. The graph signals are constructed as linear combinations of eigenvectors of the fractional Laplacian matrix on the complex manifold, with a bandwidth $|\mathcal{F}| = 40$, as described in Eq. \eqref{Baf}. By choosing a fractional parameter $a = 0.9$ and applying a bandlimited filter, the bandlimited graph signal is defined as 
		\[
		\bm{f} = \mathbf{B}^{a}\left( 10\mathbf{U}_1 + 5\mathbf{U}_2 + 20\mathbf{U}_3\right) ,
		\]
		where $\mathbf{U}_i$ represents the $i$-th column of the eigenvector matrix $\mathbf{U}$, and $\mathbf{B}^{a}$ denotes the bandlimited filter (operator) defined in \eqref{PBa}. The filter is designed to retain only the first $|\mathcal{F}|$ frequency components, suppressing high-frequency terms to achieve the bandlimited signal.
		
		To evaluate signal reconstruction performance, we randomly sample the bandlimited signal, and add Gaussian noise $\mathcal{N}(0, 0.01^2)$. Reconstruction is conducted using the GBK method, with different kernels corresponding to varying spectral bandwidths. We compare reconstructions  using kernels with bandwidths $|\mathcal{F}| = 30, 40, 50, N$, as well as GLK without bandwidth and MKL for all GBK.
		
		For varying sampling sizes and graph fractional parameters, NMSE is computed. Sampling sizes and fractional parameters are varied, and for each case, the NMSE is averaged over 20 independent trials. Figs. \ref{fig04} (a) and (b) show the NMSE with increasing sampling size for different underlying graphs, indicating that the MKL method performs best as the sample size increases, followed by GBK with $|\mathcal{F}| = 40$. Figs. \ref{fig04} (c) and (d) show the  NMSE with different fractional parameters, revealing that the reconstruction performance is optimal at a specific fractional order ($a = 0.6$ or $0.7$).
		
		\begin{figure}[htbp]
			\begin{center}
				\begin{minipage}[t]{0.49\linewidth}
					\centering
					\includegraphics[width=\linewidth]{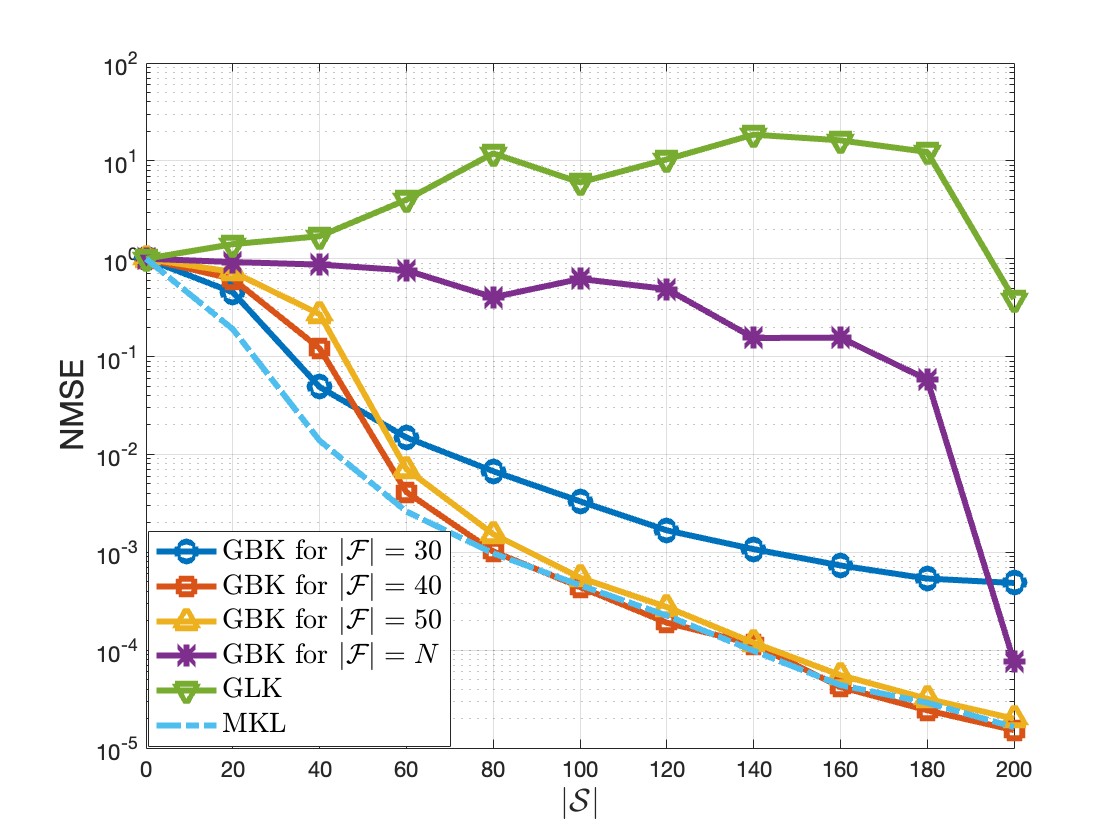}
					\parbox{3.8cm}{\tiny (a) NMSE variation with $|\mathcal{S}|$ on Swiss roll graph.}
				\end{minipage}
				\begin{minipage}[t]{0.49\linewidth}
					\centering
					\includegraphics[width=\linewidth]{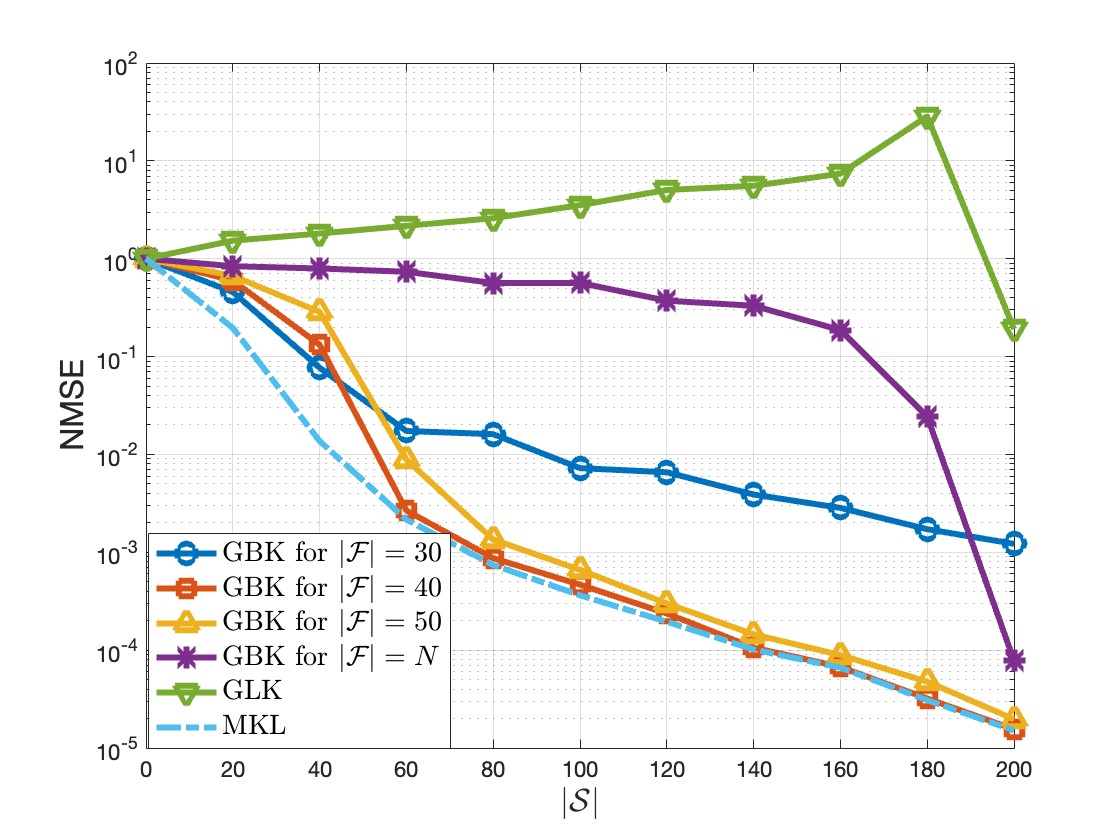}
					\parbox{3.8cm}{\tiny (b) NMSE variation with $|\mathcal{S}|$ on community graph.}
				\end{minipage}
				\begin{minipage}[t]{0.49\linewidth}
					\centering
					\includegraphics[width=\linewidth]{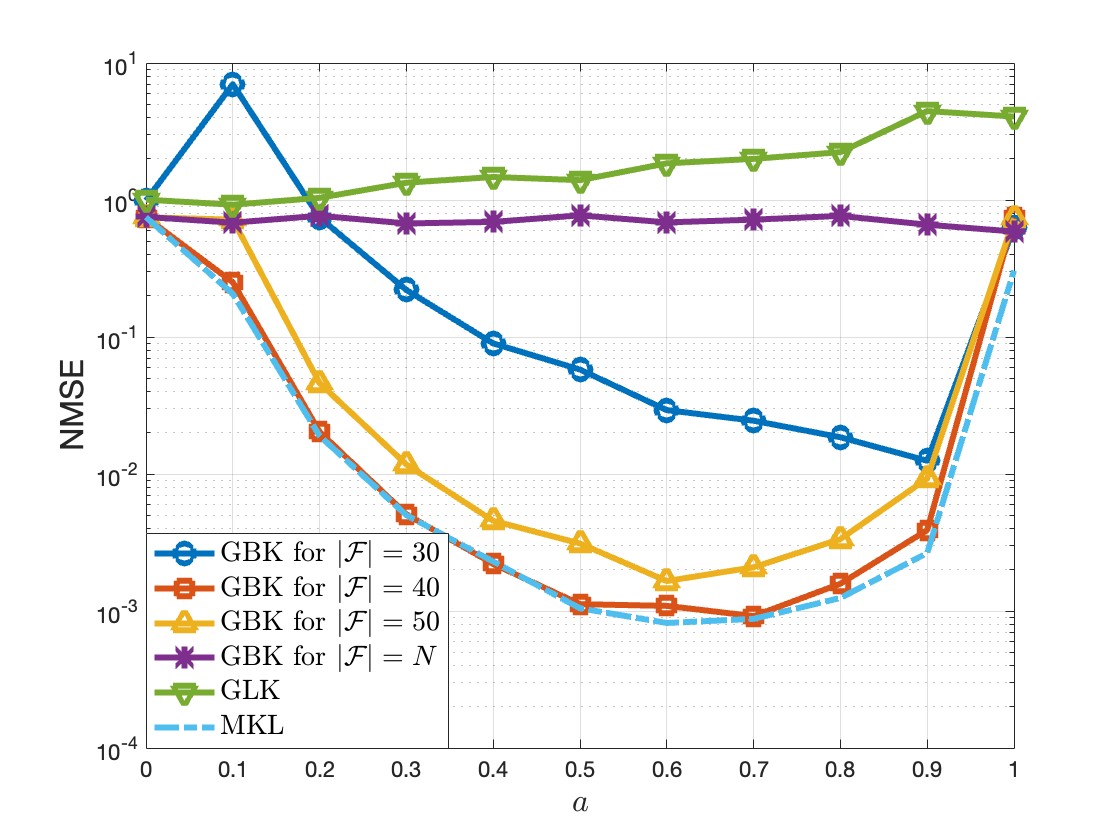}
					\parbox{3.8cm}{\tiny (c) NMSE variation with $a$ on Swiss roll graph.}
				\end{minipage}
				\begin{minipage}[t]{0.49\linewidth}
					\centering
					\includegraphics[width=\linewidth]{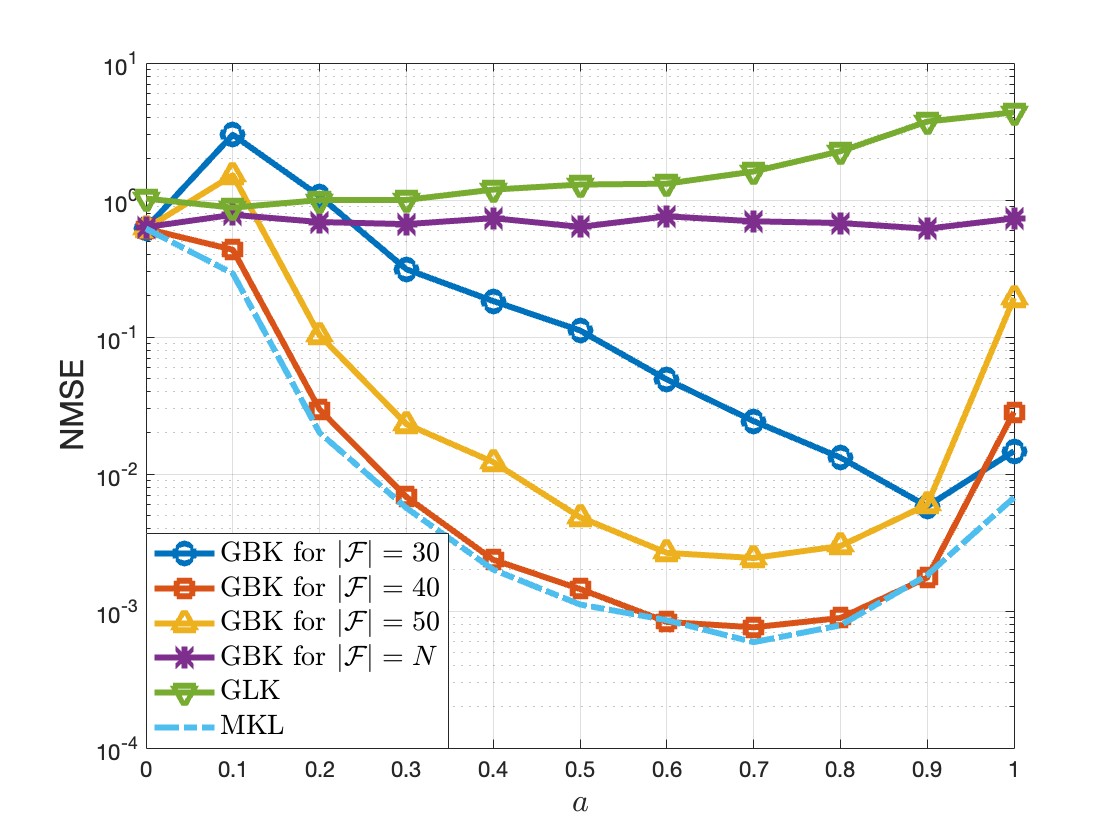}
					\parbox{3.8cm}{\tiny (d) NMSE variation with $a$ on community graph.}
				\end{minipage}
			\end{center}
			\caption{The NMSE between the reconstructed and original signals is evaluated across GBK methods.}
			\label{fig04}
		\end{figure}

		\subsubsection{Effects of Kernel Selection over Complex Manifolds}
		To effectively address the real data in the next subsection, we first simulate radar chirp signals to model sea clutter, characterized by linearly frequency-modulated (LFM) complex signals. Specifically, we generate a sea clutter graph signal with $N = 200$ points, where the frequency variation of the chirp signal is used to simulate the clutter. Three complex LFM signals, each with a different phase, represent sea clutter in different directions. The generated data consists of three-dimensional coordinates $x$, $y$, and $z$, each corresponding to a distinct signal. The first signal (for the $x$-coordinate) is given by: $\exp \left( 2\pi \mathrm{i}\left( \frac{0.1}{N} t^{2}\right)  \right)$, generating a periodic complex signal. The $y$- and $z$-coordinates are derived by introducing phase shifts of $ \pi / 3 $ and $ 2\pi / 3 $ to the $x$-coordinate signal, respectively.

		Thus, we treat this three-dimensional dataset as features in a complex manifold, with the first signal ($x$-coordinate) serving as the target non-bandlimited graph signal. Subsequently, both traditional and bandlimited kernel methods are applied to reconstruct the signal. The parameters for the traditional kernel are set to $\sigma = 0.5$, $c = 1$, and $d = 2$, with the complex manifold equipped with the Kähler metric. For the bandlimited kernel, we set $a = 0.7$ and $\mu = 10^{-4}$, with the regularization parameter $\gamma = 0.01$ for both methods. The bandlimited signal is randomly sampled with a Gaussian noise $\mathcal{N}(0, 0.01^2)$. For multi-kernel methods, we consider MKL on HGK with $\sigma = 0.5, 1, 2$, as well as MKL on all GBK.
		
		Fig. \ref{fig05} illustrates the relationship between NMSE and sampling rate for different kernels, with results averaged over 20 independent trials due to random sampling. The results indicate that most traditional kernels outperform the bandlimited kernels, with MKL based on HGK showing the best performance, followed by the single kernel HGK with $\sigma = 0.5$. For GBK, due to the non-bandlimited nature of the signal, reconstruction performance is less significant, with MKL only offering slight improvement.
		\begin{figure}[htbp]
			\begin{center}
				\begin{minipage}[t]{0.85\linewidth}
					\centering
					\includegraphics[width=\linewidth]{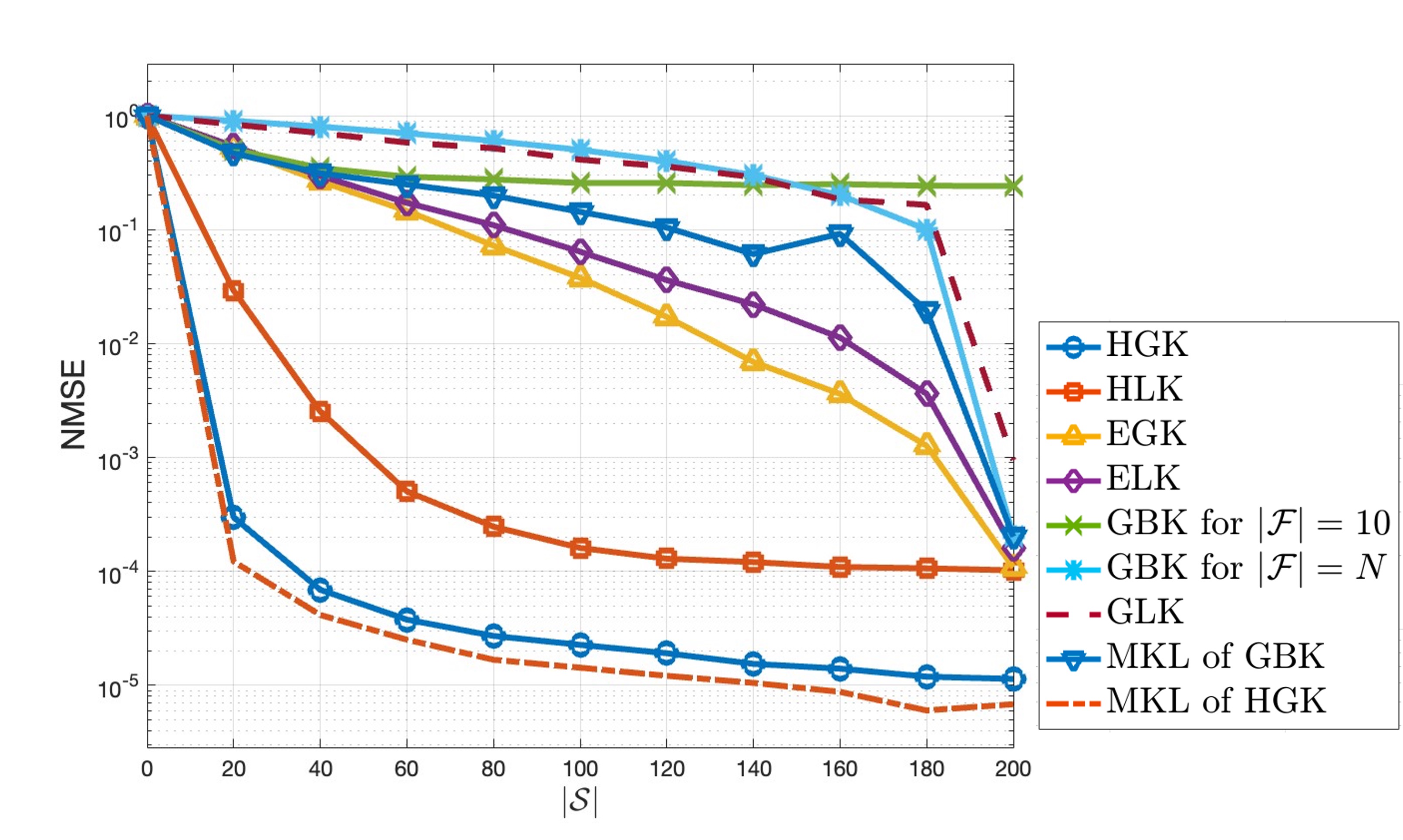}
				\end{minipage}
			\end{center}
			\caption{NMSE between the reconstructed and original signals of a simulated clutter radar.}
			\label{fig05}
		\end{figure}
		
		
		\subsection{Application to Real Data}\label{sec6.2}
		This section presents the experimental results based on real data, demonstrating the effectiveness of the proposed framework in reconstructing complex-valued graph signals. The experiments are conducted using a real-world dataset, focusing on signal reconstruction, performance evaluation, and comparison with conventional kernel-based methods.
		
		The dataset\footnote{Available online: \url{http://soma.mcmaster.ca/ipix/dartmouth/cdf051_100.html}} was obtained from a real radar system and contains complex-valued signal measurements across various azimuth angles and range bins. To ensure proper data formatting for reconstruction, preprocessing steps such as data normalization and correction are performed. Normalization is achieved by dividing the signal amplitudes of each range bin by their maximum value, while negative values are corrected by adding 256 to ensure positivity.
		
		We select the first $10^{4}$ of the graph signal and set $|\mathcal{S}|=2000$. Signal reconstruction is carried out using kernel-based methods on the complex graph manifold. Various kernel types are studied, and the signal corresponding to a specific range bin is reconstructed using the selected kernel and a predefined number of random sampling points. The reconstruction quality is evaluated using the NMSE, as shown in Table \ref{tab2}, with NMSE values under different kernel parameters for MKL applied to all traditional kernels.
		
		\begin{table*}[htbp]
			\centering
			\caption{Comparison of NMSE in Graph Signal Reconstruction Using Four Traditional Kernels under Different Kernel Parameters}\label{tab2}
			\footnotesize
			\begin{tabular}{cccccc}
				\toprule
				$\sigma$ & HGK&HLK & EGK & ELK & MKL\\
				\midrule
				$0.25$ & $5.1244 \times 10^{-4}$ & $4.71 \times 10^{-3}$& $9.137 \times 10^{-4}$&$5.32 \times 10^{-3}$ & $5.02 \times 10^{-4}$\vspace{0.1cm}\\
				$0.5$   &  $2.31 \times 10^{-4}$& $1.85 \times 10^{-3}$ & $2.40 \times 10^{-4}$ & $2.04 \times 10^{-3}$ & $1.61 \times 10^{-4}$ \vspace{0.1cm}\\
				$1$ &  $7.16 \times 10^{-4}$&$1.19 \times 10^{-3}$ & $3.77 \times 10^{-4}$ & $1.31 \times 10^{-3}$ & $3.24 \times 10^{-4}$\\
				\bottomrule
			\end{tabular}
		\end{table*}
		
		After reconstruction, the statistical properties of the optimal reconstructed signal are analyzed by comparing the probability density function (PDF) and cumulative distribution function (CDF) of the original and reconstructed signals. This provides insights in the signal's distribution and overall behavior, with theoretical distributions such as Rayleigh, Weibull, and K being used for further evaluation of the reconstruction’s alignment with expected statistical behavior.
		
		As shown in Fig. \ref{fig06}, the results indicate that the proposed methods achieve excellent signal reconstruction performance. The CDF and PDF of the reconstructed signal closely align with those of the original, and comparisons with theoretical distributions demonstrate strong consistency in statistical behavior. The performance across different kernels is evaluated, with the MKL for $\sigma=0.5$ yielding the best reconstruction quality, as reflected by the lowest NMSE. The analysis in Table \ref{tab2} further supports the methods' effectiveness in accurately reconstructing the signals.
		
		\begin{figure}[htbp]
			\begin{center}
				\begin{minipage}[t]{0.49\linewidth}
					\centering
					\includegraphics[width=\linewidth]{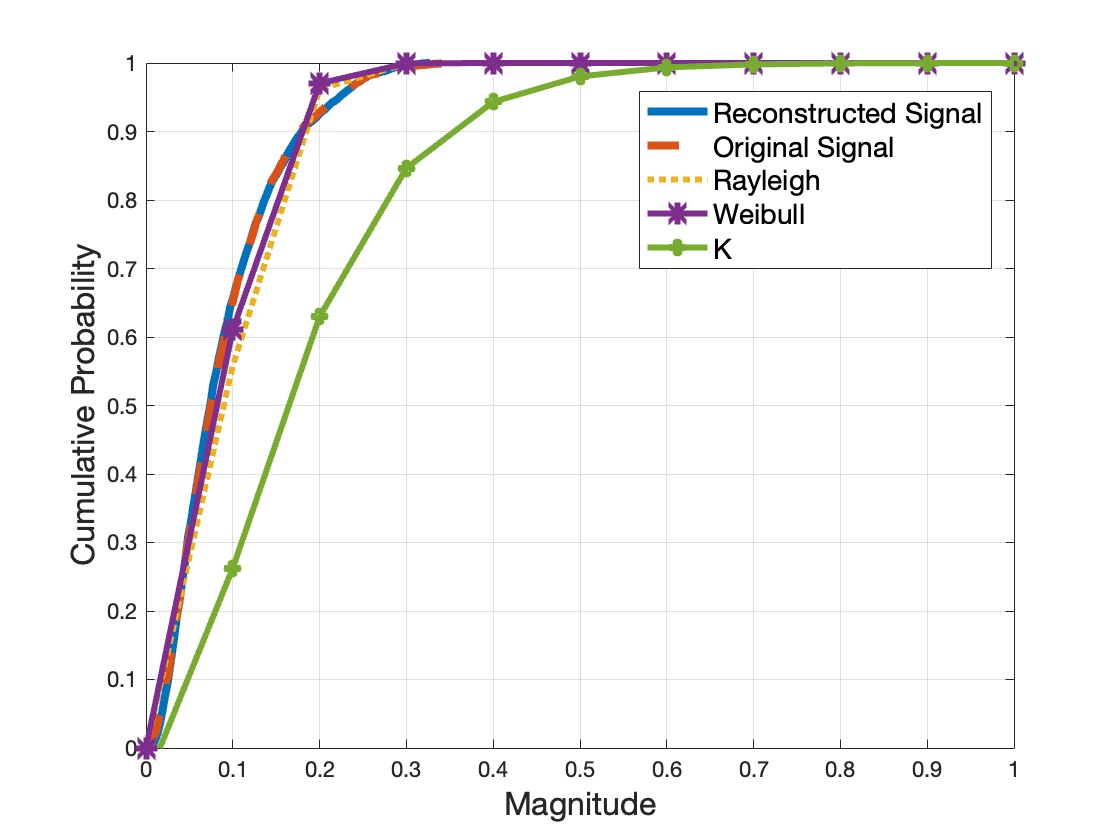}
					\parbox{2cm}{\tiny (a) CDF Comparison.}
				\end{minipage}
				\begin{minipage}[t]{0.49\linewidth}
					\centering
					\includegraphics[width=\linewidth]{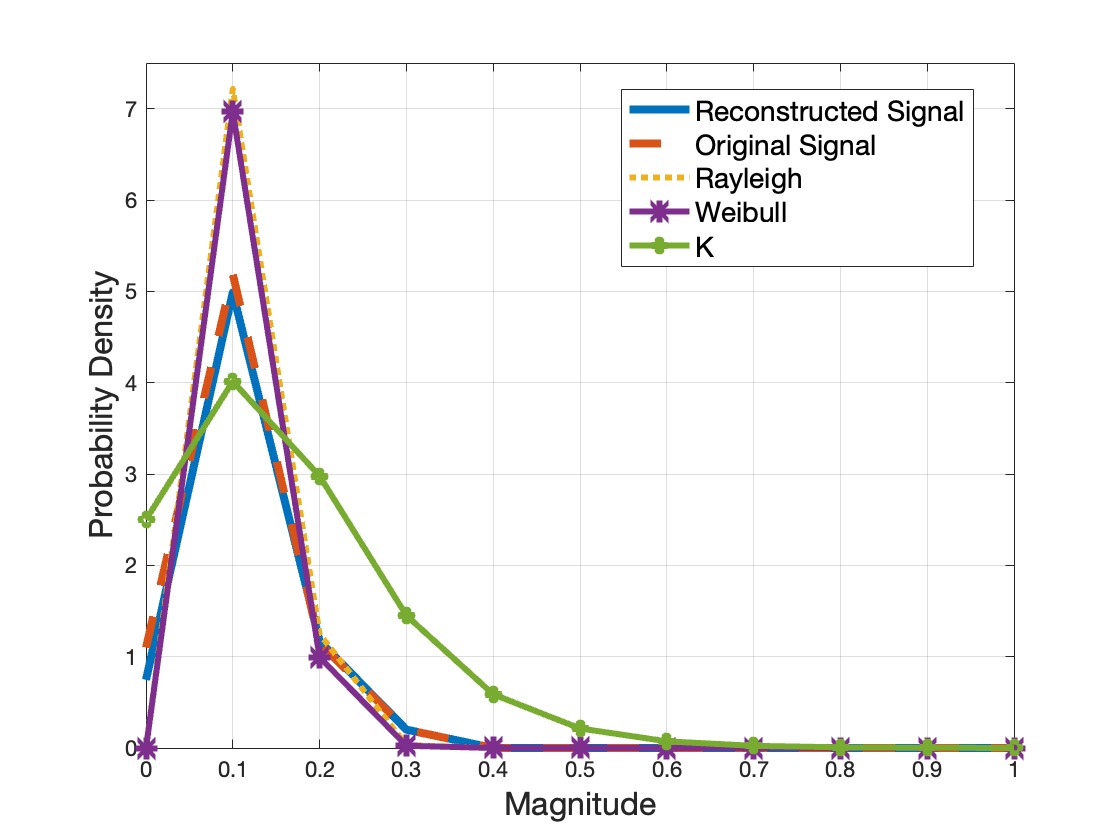}
					\parbox{2cm}{\tiny (b) PDF Comparison.}
				\end{minipage}
			\end{center}
			\caption{Comparison of reconstructed signals, original signals, and theoretical distributions (Rayleigh, Weibull, etc.) via PDF and CDF.}
			\label{fig06}
		\end{figure}
		

		\section{Conclusion}
		This paper proposed a novel framework for reconstructing complex-valued graph signals using kernel methods on complex manifolds, extending traditional graph signal processing. By embedding graph vertices into a higher-dimensional complex space, the framework approximated lower-dimensional manifolds and generalized reproducing kernel Hilbert spaces. Hermitian and geometric metrics were used to characterize the kernels and signals. Experiments on synthetic and real-world datasets demonstrated that the proposed approach achieved superior accuracy and efficiency compared to conventional kernel-based methods. Future research directions include the development of adaptive kernel design for complex manifolds, improved scalability for large graphs, and applications in areas such as quantum information, biomedical signal analysis, and social network modeling.
		
		
		\appendices
		\section{Hermitian Metric and Holomorphic Structure}
		\label{appendix A}
		In complex differential geometry, a Hermitian metric $h$ on a complex manifold $ \mathcal{M} $ is a smoothly varying Hermitian positive definite (HPD) form on the holomorphic tangent bundle $ T\mathcal{M} $. At each point $ z \in \mathcal{M} $, it defines a Hermitian inner product on the tangent space $ T_z\mathcal{M} $ as  
		\[
		h: T_z\mathcal{M} \times T_z\mathcal{M} \to \mathbb{C}.
		\]
		
		In local holomorphic coordinates $ (z^1,z^2, \dots, z^n) $, the Hermitian metric is expressed as  
		\[
		h = h_{i \bar{j}} \mathrm{d}z^i \otimes \mathrm{d}\bar{z}^j,
		\]
		where $ h_{i \bar{j}} $ is  an HPD matrix representing the local metric tensor and $ i, j \in \{1, 2, \dots, n\} $ are coordinate indices. This tensor governs the inner product of tangent vectors, i.e., 
		\[
		h(\partial_{z^i}, \partial_{z^j}) = h_{i \bar{j}}.
		\]
		
		For tangent vectors $ v = v^i \partial_{z^i} $ and $ w = w^j \partial_{z^j} $, the metric evaluates as  
		\[
		h(v, w) = \sum_{i, j} h_{i \bar{j}} v^i \overline{w}^j.
		\]
		The Hermitian metric $ h $ and  the corresponding HPD matrix $h_{i \bar{j}}$ encode key geometric properties such as distances, angles, and curvature, playing a crucial role in manifold-based signal processing and kernel methods.
		
		Under the complex manifold hypothesis, holomorphic structures ensure the analyticity and smoothness of signals, which are critical for kernel-based signal reconstruction and processing. Functions $f(z)$ on complex manifolds satisfy the \textit{Cauchy-Riemann equations}  
		\[
		\frac{\partial f}{\partial \overline{z}} = 0.
		\]  
		This condition guarantees signal smoothness, preventing local irregularities, aligning with the smoothing objectives of graph signal reconstruction. Moreover, RKHS constructed via kernel functions further ensure analyticity and smoothness of complex signals.
		
		
		\section{Proof of The Representative Theorem \ref{thm1}}\label{appendix B}
		Define $\mathcal{H}_1 = \text{span} \{ \kappa(\cdot, z_i)\mid z_i \in \mathcal{S} \}$ as the subspace spanned by the kernel functions over the sampling points $\mathcal{S}$. Any $f \in \mathcal{H}$ can be decomposed as $f = f_1 + f^\perp$, where $ f_1 \in \mathcal{H}_1$ and $f^\perp \in \mathcal{H}_1^\perp$. The optimization depends only on $f_1$, as $f^\perp$ contributes nothing to the objective function \cite{RKHS}. 
		
		Since $f_1 \in \mathcal{H}_1$, it can be expressed as a linear combination of kernel functions, namely, 
		\[
		f_1(z) = \sum_{m=1 }^{|\mathcal{S}|} \beta_m \kappa(z, z_m),
		\]  
		where $\bm{\beta} = (\beta_1, \beta_2,\dots, \beta_{|\mathcal{S}|})^\top$ are the coefficients representing the weights of the sampled points. For a given set of sampled points $ \mathcal{S} $, the function value vector is defined as  
		\[
		\bm{f}_1 = (f_1(z_1), f_1(z_2), \dots, f_1(z_{|\mathcal{S}|}))^\top \in \mathbb{C}^{|\mathcal{S}|}.
		\]
		Using matrix notation, this can be rewritten as  
		\[
		\bm{f}_1 = \mathbf{D} \mathbf{K} \mathbf{D}^\top \bm{\beta},
		\]
		where $\mathbf{K}$ represents the kernel matrix, and $\mathbf{D}$ is the sampling matrix. Notably, the key to the validity of this equation lies in the relationship $\bm{\alpha} = \mathbf{D}^\top \bm{\beta}$. In this context, the coefficient vector $\bm{\alpha}$ is sparse. In other words, the transpose of the sampling matrix simply maps the lower-dimensional coefficient vector back to its original higher-dimensional space, filling the corresponding positions with values while setting all other elements to zero.
		
		We aim to minimize
		\[
		\frac{1}{|\mathcal{S}|}  \left| \left| \bm{y}_{\mathcal{S}} - \mathbf{D}\bm{f} \right|\right|^2 + \gamma \bm{f}^{\mathrm{H}} \mathbf{K}^{\dagger} \bm{f}.
		\]  
		Since the solution $\bm{f}_1$ depends only on $\bm{\beta}$, the problem reduces to optimizing $\bm{\beta}$. Solving this yields the optimal coefficients $\bm{\beta}$, which uniquely determine $\bm{f}_1$, giving the optimal solution
		\[
		\bm{f}^{\text{opt}} = \mathbf{D}\mathbf{K} \mathbf{D}^\top \bm{\beta}.
		\]

		\section{Proof of The Smoothness Equation (Proposition \ref{pro:smo})}\label{appendix C}
		This proof leverages the symmetry of the kernel function and the matrix power to represent smoothness of the right-hand side of \eqref{eq:flf}, which can be expanded as 
		\begin{equation*}
			\begin{aligned} &\frac{1}{2} \sum^{N}_{n=1} \sum^{N}_{m=1} w^{a}_{nm}\left( f\left( z_{n}\right)  -f\left( z_{m}\right)  \right)^{2}  \\ 
				&=\frac{1}{2} \sum^{N}_{n=1} \sum^{N}_{m=1} w^{a}_{nm}\left( \left| f\left( z_{n}\right)  \right|^{2}  -2f\left( z_{n}\right)  f\left( z_{m}\right)  +\left| f\left( z_{m}\right)  \right|^{2}  \right)   \\ 
				&=\sum^{N}_{n=1} (d^{\mathrm{deg}}_{n})^{a}\left| f\left( z_{n}\right)  \right|^{2}  -\sum^{N}_{n=1} \sum^{N}_{m=1} w^{a}_{nm}f\left( z_{n}\right)  f\left( z_{m}\right)  \\ 
				&=\bm{f}^{\mathrm{H}}(\mathbf{D}^{\mathrm{deg}})^{a}\bm{f}-\bm{f}^{\mathrm{H}}\mathbf{W}^{a}\bm{f}\\ 
				&=\bm{f}^{\mathrm{H}}\mathbf{L}^{a}\bm{f},\end{aligned}
		\end{equation*}
		where $(\mathbf{D}^{\mathrm{deg}})^{a}$ and $\mathbf{W}^{a}$ are the degree matrix and the weight matrix raised to the power of $a$, respectively.

		\section{Proof of Lemma \ref{lem1}}\label{appendix D}
		The proof is similar to that of \cite{GUncertainty}. Given that $\mathbf{P}\bm{f} = \bm{f}$ and $\mathbf{B}^{a}\bm{f} = \bm{f}$, we obtain
		\[
		\mathbf{B}^{a}\mathbf{PB}^{a}\bm{f} = \mathbf{B}^{a}\mathbf{P}\bm{f} = \mathbf{B}^{a}\bm{f} = \bm{f},
		\]
		which implies that $\lambda_{\max}(\mathbf{B}^{a}\mathbf{PB}^{a}) = 1$. Conversely, if $\mathbf{B}^{a}\mathbf{PB}^{a}\bm{f} = \bm{f}$, then $\mathbf{B}^{a}\mathbf{PB}^{a}\bm{f} = \mathbf{B}^{a}\bm{f}$, and since $(\mathbf{B}^{a})^2 = \mathbf{B}^{a}$, it follows that $\mathbf{B}^{a}\bm{f} = \bm{f}$.
		For vertex localization,  the Rayleigh--Ritz theorem yields
		\[
		\max_{\bm{f}} \frac{\bm{f}^{\ast} \mathbf{P} \bm{f}}{\bm{f}^{\ast} \bm{f}} = \max_{\bm{f}} \frac{\bm{f}^{\ast} \mathbf{B}^{-a} \mathbf{P} \mathbf{B}^{a} \bm{f}}{\bm{f}^{\ast} \bm{f}} = 1.
		\]
		Thus, $\mathbf{P}\bm{f} = \bm{f}$, confirming perfect localization in both the vertex and spectral domains.

		

		

		\newpage
		
		
		
		
		

		
	\end{document}